
\documentclass[journal,a4paper]{IEEEtran}
%


%

\usepackage{amsthm}

%
\usepackage{cite}

%
\usepackage[pdftex]{graphicx}
\ifCLASSINFOpdf
\else
\fi
%
%

%
\usepackage{amsmath}
\usepackage{amssymb}
\interdisplaylinepenalty=2500
\usepackage{mathtools}

\DeclareMathOperator*{\argmax}{arg\,max}
%

\pagenumbering{gobble}

%
\usepackage{algorithm}
\usepackage{algorithmic}
\makeatletter
\def\BState{\State\hskip-\ALG@thistlm}
\makeatother
\usepackage{fixltx2e}

\usepackage{stfloats}

\usepackage{afterpage}
\usepackage{lipsum}
\usepackage[margin=0.75in]{geometry}
\newgeometry{top=115pt, left=54pt, right=37pt, bottom=54pt}
\usepackage{enumitem}

\hyphenation{op-tical net-works semi-conduc-tor}

\begin{document}
\setlength{\topsep}{0pt}
\setlength{\belowcaptionskip}{0pt}
\setlength{\abovedisplayskip}{0.9pt}
\setlength{\belowdisplayskip}{0.9pt}
\setlength{\abovedisplayshortskip}{0.8pt}
\setlength{\belowdisplayshortskip}{0.8pt}
\setlength{\parskip}{0.7pt}
\setlength{\textfloatsep}{0.9pt}
\setlength{\floatsep}{.7pt}
%
\title{Online Charge Scheduling for Electric Vehicles in Autonomous Mobility on Demand Fleets}

\author{\IEEEauthorblockN{Nathaniel Tucker \quad}
\and
\IEEEauthorblockN{Berkay Turan \quad}
\and
\IEEEauthorblockN{Mahnoosh Alizadeh}
\vspace*{-0.45cm}}


%


\maketitle

\begin{abstract}
In this paper, we study an online charge scheduling strategy for fleets of autonomous-mobility-on-demand electric vechicles (AMoD EVs). We consider the case where  vehicles complete   trips and then enter a between-ride state  throughout the day, with their information becoming available to the fleet operator in an online fashion. In the between-ride state, the vehicles must be scheduled for charging and then routed to their next passenger pick-up locations. Additionally, due to the unknown daily sequences of ride requests, the problem cannot be solved by any offline approach. As such, we study an online welfare maximization heuristic based on primal-dual methods that allocates limited fleet charging resources and rebalances the vehicles while avoiding congestion at charging facilities and pick-up locations. We discuss a competitive ratio result comparing the performance of our online solution to the clairvoyant offline solution and provide numerical results highlighting the performance of our heuristic.
\end{abstract}


%
\IEEEpeerreviewmaketitle
\newtheorem{theorem}{Theorem}
\newtheorem{lemma}{Lemma}
\newtheorem{proposition}{Proposition}
\newtheorem{corollary}{Corollary}[lemma]
\makeatletter
\def\blfootnote{\xdef\@thefnmark{}\@footnotetext}
\makeatother

\blfootnote{\footnotesize
\hspace{-8.5pt}This work was supported by  the NSF Grants 1837125, 1847096.\\
N. Tucker, B. Turan and M. Alizadeh are with the Department of Electrical and Computer Engineering, University of California, Santa Barbara, CA 93106 USA (email: nathaniel\_tucker@ucsb.edu).}
\section{Introduction}
\label{section: intro}
Three developing technologies in the transportation sector have the potential to revolutionize the paradigm of \textit{personal urban mobility}: autonomous vehicles (self-driving or driverless vehicles), mobility-on-demand (car-sharing or ride-sharing), and plug-in electric vehicles \cite{Pavone_book, ACC_Models}. These technologies have independently garnered much research and experimentation; however, literature addressing the potential synergies is still emerging \cite{Greenblatt2015}. Consequently, we consider the welfare maximization problem for a fleet dispatcher who operates a large number of Autonomous-Mobility-on-Demand Electric Vehicles (AMoD EVs). Because of the real-time requirements of AMoD systems, we propose a novel online solution for optimizing the charging and rebalancing processes of a fleet of AMoD EVs.

Regarding AMoD fleets, much work has been done focusing on matching riders with vehicles, routing vehicles to destinations, and rebalancing the vehicles throughout a set of pick-up/drop-off locations \cite{between_ride_review, between_ride_matching,between_ride_parallel,between_ride_on_demand,between_ride_decomposition}. There is also work in the area of coordinated charging for fleets of AMoD EVs. Paper \cite{xiaoqi} gives an overview of managing AMoD fleets and energy services in future smart cities. The authors of \cite{zhang_rossi_pavone} utilize a model predictive approach to optimize charge scheduling and routing in an AMoD system. Paper \cite{chen_kockelman_hanna} presents a study of the operations of a AMoD fleet including the implications of vehicle and charging infrastructure decisions. Furthermore, \cite{chen_kockelman} studies the implications of pricing schemes on an AMoD fleet. Work has also been done in congestion aware \cite{salazar2019congestion} and model predictive routing methods \cite{salazar_mpc} for AMoD systems. Additionally, \cite{rossi_iglesias_alizadeh,pavone_transit} study interactions between AMoD systems with the power grid and public transit. 

Regarding charging strategies for large populations of EVs, papers \cite{New_TSG_SmartEVGrid,New_Survey,New_20} provide in-depth reviews and studies of smart charging technologies. 
An important but less studied issue is that the benefits of smart charging can be severely limited if usage of the shared Electric Vehicle Supply Equipment (EVSEs) is uncontrolled \cite{bryce}. Moreover, without access control and allocation strategies within charging facilities, EVSEs can become congested while other EVSEs are left empty. This limits the smart charging benefits as congested EVSEs are forced to charge one EV after another to satisfy demand. To address this issue, papers including \cite{Robu,auc2charge,differentiated,menu} have studied smart charging, admission control, and resource allocation for facilities equipped with EVSEs.

In this manuscript, we aim to complement both the recent work in smart charging and AMoD fleet routing with the objective of optimizing AMoD fleet charging and rebalancing processes in an online fashion. Specifically, we study an online heuristic that schedules fleet charging, allocates limited fleet resources, and rebalances the vehicles while avoiding congestion at charging facilities and pick-up locations. Moreover, our methodology does not rely on statistics of the daily sessions (unlike model predictive approaches). The work presented in this manuscript complements existing literature on smart charging and fleet management and the main contributions are as follows: 
\begin{itemize}
    \item The online heuristic makes decisions for multiple charging facilities, multiple pick-up/drop-off regions, constrained fleet charging resources, operational costs, renewable generation integration, and rebalancing.
    \item The online heuristic does not rely on fractional allocations or rounding methods to produce integer assignments in a computationally feasible manner.
    \item The online heuristic readily handles the inherent stochasticity of the fleet scheduling problem without requiring statistics on the future inputs of the system.
    \item The online heuristic accounts for worst case (i.e., adversarially chosen between-ride sequences) and always yields welfare within a factor of $\frac{1}{\alpha}$ of the optimal offline. 
\end{itemize}
\textit{Organization:} Section \ref{section: system} describes the AMoD EV fleet charge scheduling problem and system model. Section \ref{section: Online_Mech} presents the online heuristic that updates the dual variables when solving the online problem and discusses the online heuristic's properties and performance guarantees. Section \ref{section:numerical} discusses numerical results showing the performance of our heuristic.

\newgeometry{top=105pt, left=54pt, right=37pt, bottom=54pt}

\section{Preliminaries}
\label{section: system}
\subsection{Fleet Objective}
\label{section:problem description}
In this section, we describe the charge scheduling problem for a fleet of AMoD EVs that are servicing customers within a city. The objective of the fleet dispatcher is to maximize profit by optimizing the between-ride schedules of the EVs to exploit cheaper time-of-use electricity rates and behind-the-meter solar generation in addition to efficiently distributing the vehicles throughout the area. In this manuscript, we use capitalized calligraphic variables (e.g., $\mathcal{Z}$) to represent indexed sets (e.g., $z = 1,\dots,Z$).

We consider an area of operation consisting of a set of regions $\mathcal{D} = \{ 1,\dots,D\}$ where each region $d\in\mathcal{D}$ can be viewed as a destination for a vehicle to pick up its next customer. 
Each region $d$ has a maximum capacity $\Omega_d(t)$ that limits the number of AMoD vehicles in the area (e.g., due to municipal restrictions, congestion mitigation, or ride demand forecasts). Within this area of operation, we consider $J$ consecutive between-ride sessions occurring within a time span $t=1,\dots, T$. Each between-ride session $j\in\mathcal{J}$ begins at time $t_j^-$ when a vehicle in the fleet completes a previous ride. Information for the $j$th between-ride session could be revealed earlier when the EV picks up the passenger rather than at drop-off; however, due to unknown traffic conditions and travel times, the session information is not available until $t_j^-$ (drop-off). 
At this time, the fleet dispatcher must decide what the vehicle should do next. Namely, a schedule must be created for the between-ride session that includes a charging facility, a desired charge amount, and the next customer pickup destination (if the EV has sufficient battery level, the schedule may skip charging altogether). A simple example for three vehicles' between-ride schedules is portrayed in Fig. \ref{fig: example}.
\begin{figure}
    \centering
    \includegraphics[width=1.0\columnwidth]{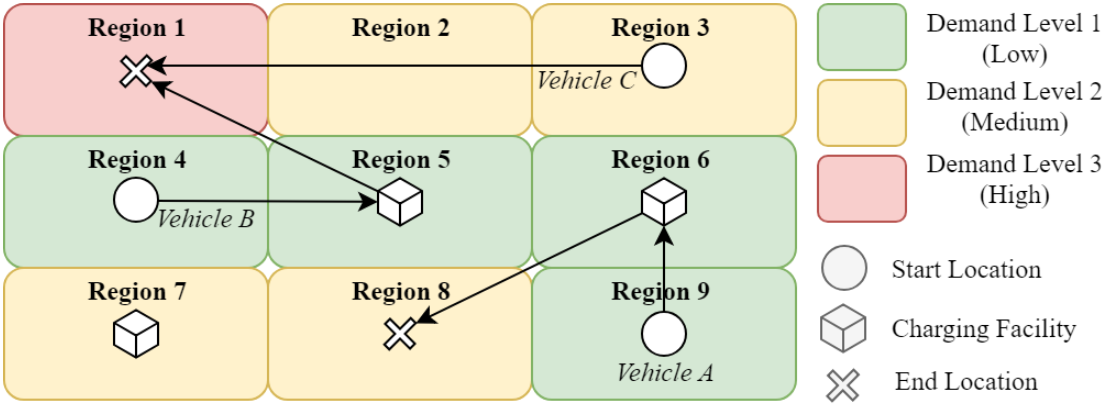}
    \caption{Example between-ride schedules for 3 vehicles.}
    \label{fig: example}
\end{figure}
\subsection{Charging Model}
\label{section:charging model}
There is a set of regions $\mathcal{F} \subseteq \mathcal{D}$ that have charging facilities with Electric-Vehicle-Supply-Equipment (EVSEs) where the EVs can replenish their batteries. Each charging facility $f\in\mathcal{F}$ is equipped with $M_f$ Single-Output-Multiple-Cable (SOMC) EVSEs each with $C_f$ output cables meaning each facility can fit up to $M_f C_f$ EVs simultaneously (but only charge $M_f$ simultaneously) \cite{SOMC, ntucker_allerton,ntucker_ACC,ntucker_TSG}. Additionally, due to power delivery limitations of the EVSE hardware, each EVSE can deliver up to $E_f$ units of energy in one time slot. Each facility can procure energy from two sources: 1) an on-site solar generation system and 2) the local power distribution grid. The available solar energy at facility $f$ at time $t$ is denoted as $\delta_f(t) \in [0,\Delta_f ]$ where $\Delta_f$ is the maximum power rating of the on-site system at facility $f$. Energy can also be purchased from the local distribution grid at price $\pi_f(t)$ per unit. Due to the local transformer's operational limits, facility $f$ is constrained to purchase less than $\mu_f(t)$ units of energy from the grid in each time slot.
\subsection{Operational Cost Model}
\label{section:operational cost model}
Since we are solely focused on optimizing a given number of between-ride sessions and are not explicitly modeling the rest of the AMoD system operation problem, we need to penalize vehicles for remaining out-of-service (i.e., charging or traveling without customers) for extended periods of time. This is also essential to compare the quality of our online solution to that of the clairvoyant offline solution. 
As such, we assume the fleet dispatcher incurs a virtual cost, $\phi(t)$, per out-of-service vehicle at each time $t$. Moreover, $\phi(t)$ is permitted to be time-varying to give the fleet dispatcher control over the quantity of active fleet vehicles throughout a given day. A large $\phi(t)$ could be used to ensure that less vehicles are out-of-service during peak demand periods.

\subsection{Offline and Online Problem}
\label{section:offline and online}
In this work, we first formulate the charge scheduling problem as an offline optimization and then relate the offline problem to the online problem. In the offline case, we assume the fleet dispatcher is clairvoyant and knows the entire sequence of $J$ events over the time span $t=1,\dots,T$. As such, the offline fleet dispatcher can create the optimal schedules for the between-ride sessions and can achieve maximal profit. However, the reality is that the dispatcher does not know the customers' desired destinations, pick-up to drop-off travel times, traffic conditions, or electricity grid conditions in advance. Instead, the between-ride sessions are revealed one-by-one throughout the time span meaning that an online solution method is required for real world implementation. Moreover, the charge scheduling problem presents challenges not easily overcome in many online solutions; namely, the lack of accurate statistics on the between-ride sessions as there are many exogenous factors that directly affect when and where between-ride sessions begin. These factors include ride demand, road congestion, weather, popular events, and construction delays, to name a few. As such, we present an online solution that can account for adversarially chosen sequences of between-ride sessions and still yield profit that is within a constant factor of the clairvoyant offline solution.

\subsection{Between-Ride Schedules}
In this section, we describe the parameters associated with each between-ride schedule.
Each between-ride session $j\in\mathcal{J}$ begins at time $t_j^-$ when an AMoD EV drops off a passenger. At this time, a set of feasible between-ride schedules is generated based on the vehicle's current battery level and location. We denote the set of feasible between-ride schedules for session $j$ as $\mathcal{S}_j$. Each schedule $s\in\mathcal{S}_j$ consists of the following components: 1) $t_j^-$: time when all between-ride schedules start for session $j$; 2) $d_j^-$: start location of all schedules; 3) $t_{js}^+$: end time for schedule $s$; 4) $d^+_{js}(t)$: binary (0,1) indicator function for when the EV reaches its end destination $d_{js}^+$ in schedule $s$ of session $j$; 
5) $o_{js}(t)$: out-of-service indicator that is set to 1 for $t\in [t_j^-, t_{js}^+ ]$ and 0 otherwise; 6) $v_{js}$: value of schedule $s$ to the fleet dispatcher given by \begin{align}
    \label{eqn:valuation}
    v_{js} = V_d( SoC^+_{js} ) + v_{d_{js}}.
\end{align}
Here, $SoC^+_{js}$ represents the \textit{State of Charge} of the EV when it reaches its next pickup location. The function $V_d(\cdot)$ calculates the fleet dispatcher's valuation of the final energy level of the EV in schedule $s$ at location $d$ (maintaining sufficient energy levels in the EVs is critical to provide uninterrupted rides to customers). We note that the fleet dispatcher is free to choose $V_d(\cdot)$ for their desired operational objectives. The variable $v_{d_{js}}$ represents the profit that the fleet dispatcher receives for picking up a customer at destination $d_{js}$. The destinations present different values to the fleet dispatcher due to exogenous factors such as ride demand, location, weather, etc; 7) $c^{mf}_{js}(t)$: binary (0,1) cable reservation variable that is set to 1 if the EV in session $j$ will use a charging cable from EVSE $m$ at facility $f$ during time step $t$ in schedule $s$; 8) $e_{js}^{mf}(t)$: energy delivered to EV at time $t$ in session $j$ at EVSE $m$ at facility $f$ in schedule $s$. Through $e_{js}^{mf}(t)$, the fleet dispatcher is able to customize when the EV will receive charge, when it will be idle, and the rate of charge (from a discrete set). This effectively allows the fleet dispatcher to \textit{smart charge} the EVs (e.g., exploit cheaper time-of-use electricity rates or behind-the-meter solar). The feasible schedules for between-ride session $j$ can be written as:
\begin{equation}
    \{ t_{j}^-, \{o_{js}(t)\}, \{ c_{js}^{mf}(t) \}, \{ e_{js}^{mf}(t) \}, d_j^-, \{d_{js}^+(t)\}, t_{js}^+, \{v_{js}\} \}.
\end{equation}
\noindent Furthermore, the fleet dispatcher sets the variable $x_{js}$ to 1 if schedule $s$ is selected and 0 otherwise. If no schedule is desirable, the dispatcher routes the vehicle to the central depot where the vehicle will wait to be assigned later on.
\subsection{Cost Model and Offline Scheduling Problem}
In order to facilitate charge scheduling and vehicle routing, the fleet dispatcher maintains a total count for each shared resource across all assigned schedules. The variables $y_c^{mf}(t)$ and $y_e^{mf}(t)$ correspond to the total allocated cables and energy, respectively, at facility $f$ at EVSE $m$ at time $t$. Additionally, each facility has to procure the energy needed by all the EVSEs within. The total energy procurement at facility $f$ at time $t$ is denoted as $y_g^{f}(t)$. The resource demands at charging facilities are calculated in equations \eqref{eqn:offline cable demand}-\eqref{eqn:offline gen demand}:
\begin{align}
    \label{eqn:offline cable demand}
    & y_{c}^{mf}(t)=\sum_{\mathcal{J},\mathcal{S}_j} c_{js}^{mf}(t) x_{js},\\
    \label{eqn:offline energy demand}
    & y_{e}^{mf}(t)=\sum_{\mathcal{J},\mathcal{S}_j} e_{js}^{mf}(t) x_{js},\\
    \label{eqn:offline gen demand}
    & y_{g}^{f}(t)=\sum_{\mathcal{J},\mathcal{S}_j,\mathcal{M}_f} e_{js}^{mf}(t) x_{js}.
\end{align}
Similarly, the fleet dispatcher counts the vehicles in the between-ride state and the number of vehicles committed to destinations in the variables $y_o(t)$ and $y_d(t)$, respectively:
\begin{align}
    \label{eqn:out of service count}
    & y_{o}(t)=\sum_{\mathcal{J},\mathcal{S}_j} o_{js}(t) x_{js},\\
    \label{eqn:offline dest demand}
    & y_{d}(t)=\sum_{\mathcal{J},\mathcal{S}_j} d_{js}^+(t) x_{js}.
\end{align}
Due to the number of vehicles allocated to each resource, the fleet dispatcher incurs various costs to serve the between-ride schedules. The energy procurement, $y_g^{f}(t)$, determines the generation cost of facility $f$:
\begin{align}
\label{eq:generation cost}
    &G_{f}(y_{g}^{f}(t)) = \\
    &\nonumber\begin{cases}
        0 & y_g^{f}(t) \in [0,\delta_f(t)]\\
        \pi_f(t) (y_{g}^{f}(t)-\delta_f(t)) & y_g^{f}(t) \in \big(\delta_f(t),\delta_f(t)+\mu_f(t)] \\
        +\infty & y_{g}^{f}(t) > \delta_f(t)+\mu_f(t).
        \end{cases}
\end{align}
Namely, electricity is free while solar is available, else the facility purchases from the grid at a rate of $\pi_f(t)$ per unit. 

Additionally, vehicles that are charging and traveling to their next pickup destination are not able to serve customers. As described in Section \ref{section:operational cost model}, the fleet dispatcher incurs a penalty proportional to the number of out-of-service vehicles:
\begin{align}
\label{eq:outofservice cost}
    &O(y_{o}(t)) = 
    \begin{cases}
        \phi(t) y_{o}(t), & y_o(t) \leq I(t)\\
        +\infty & y_o(t) > I(t),
    \end{cases}
\end{align}
where $I(t)$ is the maximum number of out-of-service vehicles that the fleet dispatcher allows at time $t$. If the AMoD fleet dispatcher has full knowledge of all between-ride sessions $j\in\mathcal{J}$ over the entire time span $t=1,\dots,T$, the primal offline optimization is as follows:
\begin{subequations}
\begin{align}
    \label{eqn:offline obj}
    &\max_{x} \sum_{\mathcal{J}, \mathcal{S}_j} v_{js} x_{js}-\sum_{\mathcal{T}, \mathcal{F}}    G_{f}(y_{g}^{f}(t)) 
    -\sum_{\mathcal{T}}    O(y_{o}(t)) \\* \nonumber
    &\textrm{ subject to:}\nonumber\\*
    \label{eqn:offline one option}
    &\hspace{20pt}\sum_{\mathcal{S}_j} x_{js} \leq 1, \hspace{17pt}\forall\; j\\*
    &\hspace{20pt}\label{eqn:offline integer}\hspace{1pt} x_{js}  \in \{ 0,1 \}, \hspace{15pt}\forall \; j, s\\*
    &\hspace{20pt}\label{eqn:offline cable lim}\hspace{1pt}y_{c}^{mf}(t) \leq C_f, \hspace{11pt} \forall\; f,m,t\\*
    &\hspace{20pt}\label{eqn:offline energy lim}\hspace{1pt}y_{e}^{mf}(t) \leq E_f, \hspace{11pt} \forall\; f,m,t\\*
    &\hspace{20pt}\label{eqn:offline dest cap} \hspace{1pt}y_{d}(t) \leq \Omega_d(t), \hspace{8pt} \forall\; d,t\\*
    &\hspace{20pt}\hspace{1pt} \nonumber \textrm{and }\eqref{eqn:offline cable demand}, \eqref{eqn:offline energy demand}, \eqref{eqn:offline gen demand}, \eqref{eqn:out of service count},\eqref{eqn:offline dest demand}.
\end{align}
\end{subequations}
In \eqref{eqn:offline obj}, the objective maximizes fleet dispatcher's utility by distributing the AMoD EVs throughout all regions $d\in\mathcal{D}$ while minimizing the operational costs due to charging facilities and out-of-service vehicles. Specifically, the last term of \eqref{eqn:offline obj} limits the duration of the between-ride sessions to increase earnings and decrease the need to use extra vehicles. Constraint \eqref{eqn:offline one option} ensures only one schedule is chosen per between-ride session, \eqref{eqn:offline integer} is an integral constraint on the decision variable, \eqref{eqn:offline cable lim} ensures the cable limit is not exceeded at each EVSE, \eqref{eqn:offline energy lim} ensures the EVSE energy limits are not exceeded, and \eqref{eqn:offline dest cap} enforces the vehicle limit in each region. By temporarily relaxing the integral constraint \eqref{eqn:offline integer}, the problem can be examined in the dual domain (however, we note that our competitive ratio results are with respect to integer allocations). Specifically, we make use of Fenchel duality with dual variables $u_j$, $p_c^{mf}(t)$, $p_e^{mf}(t)$, $p_g^{f}(t)$, $p_d(t)$, and $p_o(t)$ \cite{Fenchel}. In the following, the Fenchel conjugate of a function $f(y(t))$ is given as:
\begin{equation}
    f^*(p(t)) = \sup_{y(t)\geq0} \big\{ p(t)y(t) - f(y(t)) \big\}.
\end{equation} 

\noindent
As such, the offline Fenchel dual of \eqref{eqn:offline obj}-\eqref{eqn:offline dest cap} is as follows:
\begin{subequations}
\begin{alignat}{3}
    \label{eqn:dual obj}
    \min_{u,p} &\sum_{\mathcal{J}} u_j 
    + \sum_{\mathcal{T}, \mathcal{D}} 
    R_{d}^*(p_{d}(t))\\*
    \nonumber&+ \sum_{\mathcal{T}, \mathcal{F}}    G_{f}^*(p_{g}^{f}(t))
    + \sum_{\mathcal{T}} O^*(p_{o}(t))\\*
    \nonumber&+ \hspace{-8pt}\sum_{\mathcal{T}, \mathcal{F},\mathcal{M}_f} \Big( K_{c}^{mf*}(p_{c}^{mf}(t)) + K_{e}^{mf*}(p_{e}^{mf}(t)) \Big)
\end{alignat}
subject to:
\begin{alignat}{2}
    \label{eqn:dual user utility}
    &u_j \geq \;v_{js}&& 
    -  d_{js}^+(t_{js}^+)p_d(t_{js}^+) - \sum_{\mathcal{T}} \Big( c_{js}^{mf}(t)p_{c}^{mf}(t)\\*
    & &&+ e_{js}^{mf}(t)\big[p_{e}^{mf}(t)+p_{g}^{f}(t)\big] + o_{js}(t) p_o(t) \Big)\nonumber\\*
    & &&\hspace{3pt}\forall\;j,s,f,m,d\nonumber\\*
    \label{eqn:dual ui}
    &u_j \geq \;0, &&\hspace{3pt}\forall \;j\\*
    \label{eqn:dual pj}
    & p_{c}^{mf}(t),\;p&&_{e}^{mf}(t),\;p_{g}^{f}(t),\;p_{d}(t),\;p_{o}(t) \geq \;0, \\*
    & &&\hspace{3pt}\nonumber\forall\; f,m,d,t,
\end{alignat}
\end{subequations}
where $R_d^*(p_d(t))$, $G_f^*(p_g^f(t))$, $O^*(p_o(t))$, $K_c^{mf*}(p_c^{mf}(t))$, and $K_e^{mf*}(p_e^{mf}(t))$ are the Fenchel conjugates for the regional vehicle limit, the facility generation cost, the out-of-service cost, the EVSE cable constraint, and the EVSE energy constraint, respectively. The Fenchel conjugates for the cable and energy constraints, respectively, are as follows:
\begin{align}
\label{eq:fenchel cable cost}
    &K^{mf*}_{c}(p^{mf}_{c}(t)) = p^{mf}_{c}(t)C_f, \hspace{24pt} p^{mf}_{c}(t) \geq 0,\\
    \label{eq:fenchel energy cost}
    &K^{mf*}_{e}(p^{mf}_{e}(t)) = p^{mf}_{e}(t)E_f, \hspace{24pt} p^{mf}_{e}(t) \geq 0.
\end{align}
The Fenchel conjugate for the energy procurement cost function at each facility can be written as:
\begin{align}
\label{eq:fenchel gen cost}
    &G_f^{*}(p^{f}_{g}(t)) =  \\*
    &\nonumber\begin{cases}
        \delta_f(t)p^{f}_{g}(t), &p^{f}_{g}(t) < \pi_f(t) \\
        (\delta_f(t)+\mu_f(t))p^{f}_g(t)-\mu_f(t)\pi_f(t) & p^{f}_{g}(t) \geq \pi_f(t). \\
    \end{cases}
\end{align}
\\
The Fenchel conjugate for the regional vehicle limit is:
\begin{align}
\label{eq:fenchel reroute cost}
    &R_d^{*}(p_{d}(t)) = p_d(t)\Omega_d(t), \hspace{20pt} p_d(t) \geq 0.
\end{align}
Lastly, the Fenchel conjugate for the penalty for the out-of-service vehicles is as follows:
\begin{align}
\label{eq:fenchel outofservice}
    &O^{*}(p_{o}(t)) = 
    \begin{cases}
        0, &p_{o}(t) < \phi(t) \\
        (p_{o}(t) - \phi(t))I(t), & p_{o}(t) \geq \phi(t). \\
    \end{cases}
\end{align}
\subsection{Scheduling Decisions}
In the offline case, let us examine the Fenchel dual \eqref{eqn:dual obj}-\eqref{eqn:dual pj} when choosing charging schedules for the between-ride sessions. Specifically, examining the KKT conditions for constraint \eqref{eqn:dual user utility} reveals the optimal scheduling decisions. If a between-ride session yields $u_j \leq 0$, then the AMoD EV is not needed to serve customers at that time or electricity prices are too high; therefore, $u_j$ is set to 0 and the vehicle is routed to the central depot. Alternatively, when the fleet dispatcher wants vehicles to charge and serve customers, $u_j$ will be positive. If $u_j>0$, the optimal schedule for session $j$ can be found by finding the schedule $s\in\mathcal{S}_j$ that results in the maximal $u_j$: 
\begin{align}
\label{eq: u_n}
    \nonumber u_j =&\max_{s\in\mathcal{S}_j} \Big\{v_{js} 
    -  p_d(t_{js}^+) d^+_{js}(t_{js}^+)
    - \sum_{t\in[t_j^-,t_{js}^+]} \Big( o_{js}(t)p_{o}(t) \\*&
    + c_{js}^{mf}(t)p_{c}^{mf}(t) +e_{js}^{mf}(t)\big[p_{e}^{mf}(t)+p_{g}^{f}(t) \big]\Big)\Big\}.
\end{align}
We note that $u_j$ corresponds to the utility gained from between-ride session $j$ for the fleet dispatcher. Furthermore, the optimization problems in \eqref{eqn:offline obj}-\eqref{eqn:offline dest cap} and \eqref{eqn:dual obj}-\eqref{eqn:dual pj} require full knowledge of the between-ride sessions beforehand. 
As discussed in Section \ref{section:offline and online}, the fleet dispatcher must utilize an online solution that can schedule vehicles without any knowledge of the future (i.e., without knowledge of the optimal dual variables). In the following, we discuss such an online heuristic for the between-ride charge scheduling problem.
\section{Online Scheduling heuristic}
\label{section: Online_Mech}
\subsection{Online Scheduling}

Because the between-ride sessions are revealed throughout the time span, it is apparent that an online solution method is required. 
We consider a heuristic that updates the dual variables in an online fashion as between-ride sessions are revealed and then solves equation \eqref{eq: u_n} for each session. The online scheduling heuristic updates the dual variables for each resource based only on the amount of resource that has been allocated up to the current time (i.e., only utilizing the resource allocation counts). The online heuristic serves two main purposes: 1) it ensures that each between-ride schedule yields more value to the fleet dispatcher than the operational cost pertaining to the schedule, and 2) if the demand for rides is low enough or electricity prices are high enough such that no schedule nets positive utility, vehicles are sent back to the central depot. This eliminates further costs from vehicles circulating without serving riders. The underlying framework for the heuristic we use is akin to that of \cite{IaaS}, where the authors present an auction for allocating computing resource bundles at data centers for the purpose of cloud computing and virtual machine provisioning.

In our online scheduling heuristic, we make use of specialized functions proposed in \cite{IaaS} that approximate the optimal dual variables throughout the time span. These dual variable update functions increase slowly at first, but then increase rapidly as the number of allocated resources approach the capacity limits. Furthermore, when the amount of allocated resource is at the capacity limit, the update functions output dual variables high enough such that no schedule will yield positive utility in \eqref{eq: u_n}, thus enforcing the hard capacity limits. The updating function for the dual variable associated with the SOMC EVSE cables at charging facilities is written as follows:
\begin{align}
\label{eqn:zero inf price}
    p_{c}^{mf}(y_{c}^{mf}(t)) =& \Big(\frac{L_c}{2\Psi}\Big)
    \Big( \frac{2\Psi U_c}{L_c} \Big)^{\frac{y_{c}^{mf}(t)}{C_f}},
\end{align}
where $\Psi$ is the total number of shared resources within the fleet system:
\begin{equation}
    \Psi=2\sum_{\mathcal{F}}M_f+D+F+1.
\end{equation}
Furthermore, $L_c$ and $U_c$ correspond to the minimum and maximum value per cable per time unit, respectively. The online scheduling heuristic requires knowledge of $L_c$ and $U_c$ beforehand to set the initial values and to ensure capacity limits are not breached:
\begin{subequations}
\begin{align}
    & \label{eqn:Lc}L_c = \min_{\mathcal{J},\mathcal{S}_j,\mathcal{F},\mathcal{M}_f} \frac{v_{js}}{\Psi\sum_{t\in[t_j^-,t_{js}^+]}c_{js}^{mf}(t)},\\
    &\label{eqn:Uc}U_c = \max_{\mathcal{J},\mathcal{S}_j, \mathcal{F}, \mathcal{M}_f,\mathcal{T}}\; \frac{v_{js}}{c_{js}^{mf}(t)}, \quad c_{js}^{mf}(t)\neq 0.
\end{align}
\end{subequations}
The EVSE charging power, facility generation, out-of-service cost, and destination vehicle limit require similar lower and upper bounds on valuations: $L_e$, $U_e$, $L_g$, $U_g$, $L_o$, $U_o$, $L_d$, and $U_d$, respectively. These are calculated as in equations \eqref{eqn:Lc} and \eqref{eqn:Uc} with the corresponding variables to replace $c_{js}^{mf}(t)$. 

The dual variable update function for the dual variable associated with the SOMC EVSE energy limitations at charging facilities is as follows:
\begin{align}
\label{eqn:zero inf price energy}
    p_{e}^{mf}(y_{e}^{mf}(t)) =& \Big(\frac{L_e}{2\Psi}\Big)
    \Big( \frac{2\Psi U_e}{L_e} \Big)^{\frac{y_{e}^{mf}(t)}{E_f}}.
\end{align}
The update function of the dual variable for the piecewise linear generation cost at each charging facility is more complex than \eqref{eqn:zero inf price} and \eqref{eqn:zero inf price energy}. It has to account for the free solar generation as well as the linear price to procure energy from the local distribution grid. As such, we propose the following dual variable update function:
\begin{align}
\label{eqn:newpricing}
    &p_g^f(y_g^f(t)) = \\*
    &\nonumber\begin{cases}
        \Big( \frac{L_g}{2\Psi} \Big) \Big( \frac{2\Psi\pi_f(t)}{L_g} \Big)^{ \frac{y_g^f(t)}{\delta_f(t)}}, \hspace{87pt} y_g^f(t) < \delta_f(t), \\
        \Big( \frac{L_g-\pi_f(t)}{2\Psi} \Big) \Big( \frac{2\Psi(U_g-\pi_f(t))}{L_g-\pi_f(t)} \Big)^{ \frac{y_g^f(t)}{\delta_f(t)+\mu_f(t)}}+\pi_f(t), \\
        \hspace{175pt} y_g^f(t) \geq \delta_f(t). \\
        \end{cases}
\end{align}
The heuristic dual variable update function for the vehicle limits at region $d$ can be written as follows:
\begin{align}
\label{eqn:newpricing_reroute}
    p_{d}(y_{d}(t)) =& \Big(\frac{L_d}{2\Psi}\Big)
    \Big( \frac{2\Psi U_d}{L_d} \Big)^{\frac{y_{d}(t)}{\Omega_d(t)}}.
\end{align}
Last, the penalty from vehicles in the out-of-service state also requires a heuristic update function for the dual variable $p_o(y_o(t))$ which can be written as:
\begin{align}
\label{eqn:newpricing_outofservice}
    &p_o(y_o(t)) =
        \Big( \frac{L_o-\phi(t)}{2\Psi} \Big) \Big( \frac{2\Psi(U_o-\phi(t))}{L_o-\phi(t)} \Big)^{ \frac{y_o(t)}{I(t)}}+\phi(t).
\end{align}
With the 5 aforementioned dual variable update functions \eqref{eqn:zero inf price}, \eqref{eqn:zero inf price energy}, \eqref{eqn:newpricing}, \eqref{eqn:newpricing_reroute}, and \eqref{eqn:newpricing_outofservice}, we now have surrogate functions to use in place of the optimal dual variables in order to solve equation \eqref{eq: u_n} in an online fashion (i.e., at the inception of each between-ride session).

\subsection{Procedure and Performance Guarantees}
The step-by-step procedure for scheduling between-ride sessions for a fleet of AMoD EVs is outlined in Algorithm \textsc{onlineAMoDscheduling}. Namely, at the inception of between-ride session $j$, the fleet dispatcher generates feasible schedules $\mathcal{S}_j$ and then the best schedule, $s^*$, is chosen in line \ref{s_star} which makes use of the heuristically updated dual variables. After every between-ride schedule selection, the fleet dispatcher updates the dual variables accounting for the total amounts of allocated resources.
\begin{algorithm}[]
\small
    \caption{\textsc{onlineAMoDscheduling}}
    \label{algorithm}
    \begin{algorithmic}
    \STATE \textbf{Input:} $I, J, \mathcal{F}, \mathcal{M}_f,C_f, E_f, \mu_f, \delta_f, \pi_f,$
    \STATE \hspace{30pt}$\mathcal{D}, \Omega_d, \phi, \Psi, \{L,U\}_{c,e,g,d,o}$
    \STATE \textbf{Output:} $x, p$
    \end{algorithmic}
    \begin{algorithmic}[1]
    \STATE Define $G_f(\cdot)$ and $O(\cdot)$ according to \eqref{eq:generation cost} and \eqref{eq:outofservice cost}.
    \STATE Define and initialize the dual update functions \eqref{eqn:zero inf price}, \eqref{eqn:zero inf price energy}, \eqref{eqn:newpricing}, \eqref{eqn:newpricing_reroute}, and \eqref{eqn:newpricing_outofservice}.
    \STATE Initialize $x_{js}=0$, $y(t)=0$, $u_j=0$.
    \STATE \textbf{At the inception of between-ride session $j$:}
    \STATE Generate feasible charging/pickup schedules $\mathcal{S}_j$.
    \STATE Update dual variable $u_j$ according to \eqref{eq: u_n}. \label{alg:utility}
    \IF{$u_j > 0$}
        \STATE \label{s_star}$s^{\star} =\argmax_{\mathcal{S}_j}\big\{v_{js} -  p_d(t_{js}^+) d^+_{js}(t_{js}^+)$ \\
        \vspace{3pt}
        \hspace{35pt} $- \sum_{t\in[t_j^-,t_{js}^+]} \big[ o_{js}(t)p_{o}(t)+c_{js}^{mf}(t)p_{c}^{mf}(t)$\\
        \vspace{3pt}
        \hspace{35pt} $ +e_{js}^{mf}(t)[p_{e}^{mf}(t)+p_{g}^{f}(t)] \big]\big\}$
        \vspace{3pt}
        \vspace{3pt}
        \STATE $x_{js^{\star}}=1$ and $x_{js}=0$ for all $s \neq s^{\star}$
        \STATE Update demand $y(t)$ for cables, energy, generation, destination, and out-of-service according to \eqref{eqn:offline cable demand}-\eqref{eqn:offline dest demand}. \label{alg:demand update}
        \STATE Update dual variables $p(t)$ according to \eqref{eqn:zero inf price}, \eqref{eqn:zero inf price energy}, \eqref{eqn:newpricing}, \eqref{eqn:newpricing_reroute}, and \eqref{eqn:newpricing_outofservice}. \label{alg:price update}
    \ELSE
        \STATE Send the AMoD EV to central depot to re-enter system later and set $x_{js}=0$, \hspace{3pt}  $\forall$ $s \in \mathcal{S}_j$.
    \ENDIF
    \end{algorithmic}
\end{algorithm}

We can compare the total welfare generated from our online solution to that of the clairvoyant offline solution in the form of a competitive ratio. An online heuristic is described as $\alpha$-competitive when the ratio of welfare from the clairvoyant offline solution to the welfare from the online heuristic is bounded by $\alpha\geq1$. For the between-ride charge scheduling problem, we extend a previous competitive ratio result from \cite{IaaS}. In this work, we assume that each between-ride session utilizes only a small amount of the available resources, thus ensuring that the allocation of one schedule does not adversely affect too many future sessions. 
\begin{theorem}
The online heuristic \textsc{onlineAMoDscheduling} in Algorithm \ref{algorithm} is $\alpha$-competitive in welfare across all fleet resources for the fleet dispatcher over $J$ between-ride sessions where $\alpha = \max \big\{ \alpha_1, \alpha_2, \alpha_3, \alpha_4, \alpha_5 \big\}$.
\end{theorem}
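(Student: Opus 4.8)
The plan is to follow the primal–dual / online competitive analysis template used in \cite{IaaS}, adapted to the fact that here we have five distinct classes of shared resources (EVSE cables, EVSE energy, facility generation, regional vehicle limits, out-of-service count) rather than a single bundle type. The backbone of the argument is a weak-duality sandwich: let $P$ denote the primal objective value of the assignment produced by \textsc{onlineAMoDscheduling}, let $D$ denote the value of the Fenchel dual objective \eqref{eqn:dual obj} evaluated at the dual variables that the heuristic has accumulated at termination, and let $\mathrm{OPT}$ denote the clairvoyant offline optimum. Since the heuristic always picks $s^\star$ maximizing \eqref{eq: u_n} and only accepts when $u_j>0$, the accumulated $(u,p)$ is dual-feasible, so weak duality gives $\mathrm{OPT}\le D$. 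It therefore suffices to show $D \le \alpha\, P$ for $\alpha=\max\{\alpha_1,\dots,\alpha_5\}$, which yields $\mathrm{OPT}\le\alpha\,P$, i.e. $\alpha$-competitiveness for integer allocations (the LP relaxation of \eqref{eqn:offline integer} is only used to pass to the dual, and $P$ itself is integral).

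The next step is to bound $D$ by $P$ term-by-term. Decompose $D$ as $\sum_j u_j$ plus the sum of the five conjugate terms $R_d^*,G_f^*,O^*,K_c^{mf*},K_e^{mf*}$, and decompose $P$ correspondingly as (total valuation) minus (generation and out-of-service costs). The standard trick is an incremental/telescoping argument over the arrival sequence $j=1,\dots,J$: when session $j$ is accepted with schedule $s^\star$, the increase in $\sum_j u_j$ is exactly $u_j$, and each dual-variable update function (\eqref{eqn:zero inf price}, \eqref{eqn:zero inf price energy}, \eqref{eqn:newpricing}, \eqref{eqn:newpricing_reroute}, \eqref{eqn:newpricing_outofservice}) is an exponential that is constructed precisely so that the resulting increase in the associated conjugate term is at most a constant multiple $\alpha_i$ of the increase in the corresponding part of $P$ (the value consumed, net of any cost) contributed by $s^\star$. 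One establishes, for each resource class separately, a differential/difference inequality of the form $\Delta(\text{conjugate term}) \le (\alpha_i-1)\,\Delta(\text{primal allocation value})$; summing these over all accepted sessions and all time slots and adding $\sum_j u_j \le \sum_j (\text{primal value of } s^\star)$ telescopes to $D \le \alpha P$. The constants $\alpha_1,\dots,\alpha_5$ emerge as $1+\ln(2\Psi U_\bullet/L_\bullet)$-type quantities (with the generation and out-of-service cases using the shifted bases $U_g-\pi_f(t)$, $U_o-\phi(t)$ appearing in \eqref{eqn:newpricing} and \eqref{eqn:newpricing_outofservice}), and we take the max because the single sequence of accepted schedules must simultaneously satisfy all five inequalities.

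Two auxiliary facts feed into this. First, the "small bid" assumption — each session uses only a small fraction of any resource's capacity — is what lets us replace the discrete jump in $y^{mf}_c(t)$ (etc.) by the derivative of the exponential update function, so that the per-step increase in each conjugate is controlled; without it the inequality would carry an extra error term. Second, one must check the boundary behavior: because each update function is initialized so that at full capacity it returns a price so large that the bracket in \eqref{eq: u_n} is negative for every remaining $s$, the heuristic never over-allocates, so the hard constraints \eqref{eqn:offline cable lim}, \eqref{eqn:offline energy lim}, \eqref{eqn:offline dest cap} and the $I(t)$ cap are respected and the $+\infty$ branches of \eqref{eq:generation cost}, \eqref{eq:outofservice cost} never activate — this is needed both for primal feasibility and so that every conjugate evaluated along the run is finite.

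The main obstacle I expect is handling the two piecewise update functions for generation \eqref{eqn:newpricing} and out-of-service \eqref{eqn:newpricing_outofservice}: unlike the clean single-exponential functions \eqref{eqn:zero inf price}, \eqref{eqn:zero inf price energy}, \eqref{eqn:newpricing_reroute}, they have a kink at the solar threshold $\delta_f(t)$ (resp. the free-penalty regime $p_o<\phi(t)$) and a constant offset $\pi_f(t)$ (resp. $\phi(t)$). One must verify the term-by-term inequality separately on each linear piece, check that it is preserved across the kink (the function value and the relevant one-sided increments must match up so the telescoping does not lose a term), and confirm that the offsets in the conjugates \eqref{eq:fenchel gen cost}, \eqref{eq:fenchel outofservice} exactly cancel the offsets introduced by the update functions. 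The rest is the routine exponential-integral bookkeeping that produces the explicit $\alpha_i$; the delicate accounting is entirely at these two thresholds.
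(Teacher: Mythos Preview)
Your proposal is correct and takes essentially the same route as the paper: the paper proves Theorem~1 by invoking Lemmas~\ref{lemma cables}--\ref{lemma outofservice} (each showing that its update function satisfies the Differential Allocation-Payment Relationship of Definition~1 with its own constant $\alpha_i$) and then feeding these into the incremental primal-dual bound $P^j-P^{j-1}\ge\tfrac{1}{\alpha}(D^j-D^{j-1})$ together with weak duality (Lemmas~\ref{Background 1}--\ref{Background 3}), which is exactly the telescoping weak-duality sandwich you outline, including the separate handling of the piecewise pieces of \eqref{eqn:newpricing} and \eqref{eqn:newpricing_outofservice}. The only cosmetic slips are that the paper's constants come out as $\alpha_i=\ln(2\Psi U_\bullet/L_\bullet)$ (and the shifted analogue for generation and out-of-service) rather than $1+\ln(\cdot)$, and the per-step inequality is stated in the form $\tfrac{1}{\alpha}\,\Delta(\text{conjugate})\le \Delta(\text{payment}-\text{cost})$ rather than your $(\alpha_i-1)$ version.
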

\begin{proof}
From Lemmas \ref{lemma cables}-\ref{lemma outofservice} (in Appendix), we have welfare guarantees $\alpha_1,\dots,\alpha_5$ for each of the shared resources. To find the $\alpha\geq1$ for the entire system, we take the maximum over $\alpha_1,\dots,\alpha_5$ to yield the bound that accounts for all resources.
\end{proof}
\section{Numerical Results}
\label{section:numerical}
In this section, we discuss numerical results showing the performance of our online heuristic. In the following simulation, electricity prices and solar generation data were sourced from actual California ISO data in the Bay Area from 2018 \cite{LMPweb},\cite{Solar}. We simulated for a fleet operating in San Jose, CA with $D=46$ regions and $F=8$ charging facilities. Each charging facility is identical with $M_f = 10$ and $C_f=4$. Each of the 8 facilities has on-site solar with a maximum generation of 256 kWh. Likewise, each facility can purchase energy from the grid up to 256 kWh per time step. We set the penalty for out-of-service vehicles equal to 2$\times$ the highest grid electricity price to penalize lengthy between-ride durations. Valuations for each of the regions were either \$15, \$10, or \$5 (red, yellow, green, in Fig. \ref{fig:SanJose_example}, respectively) based on daily ride demand \cite{sherpa}. The regions have AMoD vehicle limits set to 40 (separate from facility capacity if there is a facility in the region). Each vehicle entered the system with either 25\%, 50\%, or 75\% battery level (50 kWh batteries) and were allowed make charge requests in increments of 12.5 kWh. The EVSEs at the charging facilities were limited to deliver either 0 kWh per time slot or 5 kWh per time slot. Furthermore, $V(100\%) =\$10, V(75\%) =\$7.5, V(50\%) =\$5,$ and $V(25\%) = \$2.5$. We also included a linear penalty (\$2 per region traveled) to the schedule valuations to devalue long between-ride routes.
\begin{figure}
    \centering
    \includegraphics[width=1.0\columnwidth]{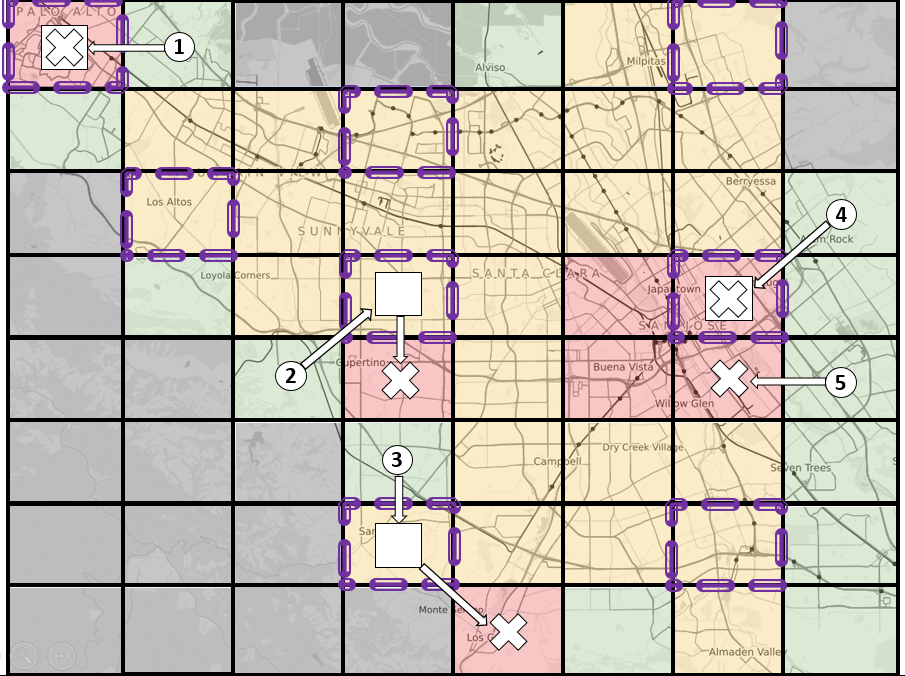}
    \caption{San Jose, CA ride demand split into $D=46$ regions. Ride demand high to low: Red, Yellow, Green. Purple Outline: Charging facility available. Gray Regions: No demand. Also Shown: Most popular between-ride routes for 5 different starting points. Circle: Between-ride session start location. Square: Charging session at facility. Cross: Between-ride end location.}
    \label{fig:SanJose_example}
\end{figure}

In Figure \ref{fig:SanJose_example} we show the most popular between-ride routes for vehicles starting at 5 of the regions (randomly chosen). 
\begin{figure}
    \centering
    \includegraphics[width=1.0\columnwidth]{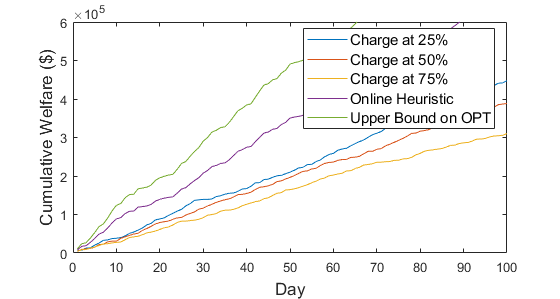}
    \caption{Fleet dispatcher welfare across 100 days.}
    \label{fig:welfare}
\end{figure}
In Figure \ref{fig:welfare}, we compare the welfare generated by our online heuristic to 3 conservative online methods and an upper bound on the optimal solution. Specifically, we compare our heuristic to 3 threshold policies where each EV automatically charges if it is below 75\%, 50\%, or 25\% at the nearest facility and then routes itself to the closest high value destination that is not at the AMoD vehicle limit. As seen in Figure \ref{fig:welfare}, our heuristic is able to consistently outperform these threshold strategies. This is primarily due to shifting charging to time slots when there is available solar or cheaper time-of-use electricity rates (the threshold strategies charge once the vehicle is plugged in and do not stop until fully charged). Because the state-space of any system such as this grows exponentially to intractable sizes, we instead present an upper bound on the optimal offline solution. Accordingly, we calculated the upper bound by relaxing the capacity limits at each facility and destination to allow for all sessions to select their utility maximizing schedule without being constrained by capacity limits.

\section{Conclusion}
In this manuscript, we studied an online heuristic that provides an approximate solution to the AMoD fleet charge scheduling problem. The online heuristic makes fleet decisions for multiple regions, constrained resources, operational costs, renewable generation integration, and rebalancing.
The online heuristic does not rely on fractional allocations or rounding methods to produce integer assignments in a computationally feasible manner.
Additionally, online heuristic readily handles the inherent stochasticity of the fleet charge scheduling problem including unknown start locations, unknown energy levels, and unknown ride length/destination statistics.
Last, the online heuristic accounts for worst case scenarios (i.e., adversarially chosen between-ride sequences) and always yields welfare within a factor of $\frac{1}{\alpha}$ of the offline optimal. We discussed a competitive ratio (with proofs in the Appendix) and presented simulation results highlighting the performance of the heuristic.


\bibliographystyle{IEEEtran}
\bibliography{references}

\section*{Appendix}
\vspace{3ex}
\noindent\textbf{Definition 1. } (From \cite{IaaS}) \textit{The Differential Allocation-Payment Relationship for a given parameter $\alpha \geq 1$ is:}
\begin{align}
\label{eqn:diffalloc}
    \big(p(t) - f'(y(t))\big) \text{d}y(t) \geq \frac{1}{\alpha(t)} f^{*'}(p(t)) \text{d}p(t)
\end{align}
for all $t\in[0,T]$ and for all shared resources where $f'(y(t))$ is the derivative of an operational cost function and $f^{*'}(p(t))$ is the corresponding Fenchel conjugate's derivative.

\vspace{3ex}
\begin{lemma}
\label{lemma cables}
The online dual variable update heuristic \eqref{eqn:zero inf price} is $\alpha_1$-competitive in welfare generated from the limited number of SOMC EVSE cables where
\begin{align}
\nonumber
    \alpha_1= \ln{\Big(\frac{2\Psi U_c}{L_c}\Big)}.
\end{align}
\end{lemma}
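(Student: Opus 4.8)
The plan is to reduce the claim to the primal--dual machinery encoded in Definition~1: once the pricing rule \eqref{eqn:zero inf price} is shown to satisfy the Differential Allocation--Payment Relationship for the cable resource with parameter $\alpha_1 = \ln(2\Psi U_c/L_c)$, the standard integration-and-weak-duality argument delivers $\alpha_1$-competitiveness. The first step is to identify the operational ``cost'' function associated with a single SOMC cable. Constraint \eqref{eqn:offline cable lim} is a hard capacity bound, so the corresponding $f$ is the indicator of $[0,C_f]$: $f(y)=0$ for $y\in[0,C_f]$ and $+\infty$ otherwise, giving $f'(y)=0$ on the feasible region, while \eqref{eq:fenchel cable cost} gives $f^{*}(p)=pC_f$ and hence $f^{*\prime}(p)=C_f$. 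With these substitutions Definition~1 for this resource becomes simply $p_c^{mf}(t)\,\mathrm{d}y_c^{mf}(t)\ \ge\ \tfrac{1}{\alpha_1}\,C_f\,\mathrm{d}p_c^{mf}(t)$.

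The core computation is to differentiate the pricing rule. Writing $\ln p_c^{mf} = \ln\!\frac{L_c}{2\Psi} + \frac{y_c^{mf}}{C_f}\ln\!\frac{2\Psi U_c}{L_c}$ and differentiating in $y_c^{mf}$ yields $\frac{\mathrm{d}p_c^{mf}}{\mathrm{d}y_c^{mf}} = \frac{\ln(2\Psi U_c/L_c)}{C_f}\,p_c^{mf}$, i.e. $p_c^{mf}\,\mathrm{d}y_c^{mf} = \frac{C_f}{\ln(2\Psi U_c/L_c)}\,\mathrm{d}p_c^{mf} = \frac{1}{\alpha_1}\,C_f\,\mathrm{d}p_c^{mf}$. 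So the Differential Allocation--Payment Relationship holds with equality at exactly $\alpha_1=\ln(2\Psi U_c/L_c)$, which is where the constant in the lemma comes from; one also notes $\alpha_1\ge 1$ since $2\Psi U_c/L_c \ge e$ for any nontrivial instance. The exponential shape of \eqref{eqn:zero inf price} is chosen precisely so that the marginal payment $p_c^{mf}\,\mathrm{d}y_c^{mf}$ is proportional to the marginal growth $C_f\,\mathrm{d}p_c^{mf}$ of the dual conjugate term $K_c^{mf*}(p_c^{mf}(t))$.

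It then remains to handle the two endpoints of the allocation trajectory and assemble the global bound. At $y_c^{mf}=0$ the price is $L_c/(2\Psi)$; by the definition of $L_c$ in \eqref{eqn:Lc} and the count $\Psi=2\sum_{\mathcal F}M_f+D+F+1$ of shared resources, this initial, not-yet-paid-for slice is small enough to be absorbed into the $\alpha_1$ bound once charged against a $1/\Psi$ share of each schedule's value. At $y_c^{mf}=C_f$ the price reaches $U_c$, the largest per-cable-per-slot value any schedule can have by \eqref{eqn:Uc}, so \eqref{eq: u_n} can never select a schedule that overfills the cable --- here the standing assumption that each session consumes only a small fraction of any resource guarantees the allocation process does not overshoot $C_f$ and keeps negligible the error in passing from the discrete updates of Algorithm~\ref{algorithm} to the differential relation. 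Integrating the (equality) relation along the trajectory then gives $\sum_t K_c^{mf*}(p_c^{mf}(t)) \le \alpha_1\cdot(\text{total cable payment extracted})$ for every EVSE; summing over $f,m$, adding the $\sum_j u_j$ terms, and invoking weak Fenchel duality between \eqref{eqn:offline obj}--\eqref{eqn:offline dest cap} and \eqref{eqn:dual obj}--\eqref{eqn:dual pj} (the constructed $p$ being dual-feasible at termination) bounds the offline optimum's cable component by $\alpha_1$ times the online welfare's cable component.

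I expect the main obstacle to be the boundary bookkeeping rather than the calculus: showing rigorously that the dual variables produced by Algorithm~\ref{algorithm} remain feasible for \eqref{eqn:dual user utility}--\eqref{eqn:dual pj} at termination (so weak duality applies), and quantifying the ``small-bid'' hypothesis precisely enough that the initial offset $L_c/(2\Psi)$ and the last allocated slice contribute at most an $O(1/\Psi)$ fraction of each schedule's value --- i.e. re-deriving the corresponding lemma of \cite{IaaS} in the present notation and verifying that the SOMC-cable constraint \eqref{eqn:offline cable lim} satisfies its hypotheses. If a single between-ride schedule could claim a non-negligible fraction of $C_f$, the per-step error in replacing discrete allocation jumps by $\mathrm{d}y_c^{mf}$ would inflate the constant beyond $\ln(2\Psi U_c/L_c)$, so that assumption is doing real work.
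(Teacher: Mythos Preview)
Your proposal is correct and follows essentially the same route as the paper: identify the cable cost as a zero--infinity indicator so that $f'=0$ and $f^{*\prime}=C_f$, differentiate the exponential pricing rule \eqref{eqn:zero inf price} to verify the Differential Allocation--Payment Relationship holds with equality at $\alpha_1=\ln(2\Psi U_c/L_c)$, and then defer the remaining primal--dual bookkeeping to the background lemmas from \cite{IaaS} (Lemmas~\ref{Background 1}--\ref{Background 3} here). Your additional remarks on the endpoint behavior and the small-bid assumption are more explicit than the paper's treatment, which simply invokes those lemmas without unpacking them.
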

\vspace{1ex}
\noindent \textit{Proof of Lemma \ref{lemma cables}}: We will show that the dual variable update heuristic in \eqref{eqn:zero inf price} satisfies the \textit{Differential Payment-Allocation Relationship} in equation \eqref{eqn:diffalloc} with parameter $\alpha_1$. Then the rest of the Lemma follows from Lemmas \ref{Background 1}, \ref{Background 2}, and \ref{Background 3}.

The cables at each SOMC EVSE do not have an explicit cost function; rather, they are free until the capacity $C_f$ is reached and then no more can be allocated. As such, the cost function $f(y_c^{mf}(t))$ for the cables can be seen as a zero-infinite step function with the step occurring right after $C_f$. Additionally, the dual variable update function \eqref{eqn:zero inf price} never allows $y_c^{mf}(t)$ to exceed $C_f$ so the derivative $f'(y_c^{mf}(t))=0$ while $y_c^{mf}\leq C_f$ and $y_c^{mf}(t) \leq C_f \;\forall\; m,f,t$.

The derivative of the Fenchel conjugate \eqref{eq:fenchel cable cost} for the cable resource is as follows:
\begin{align*}
    K_c^{mf*'}(p_c^{mf}(t)) = C_f.
\end{align*}
The derivative of the proposed pricing function \eqref{eqn:zero inf price} is:
\begin{align}
\label{eq:cable price deriv}
    \nonumber\text{d}p_c^{mf}(t) = &\Big(\frac{L_c}{2\Psi C_f}\Big)\Big(\frac{2\Psi U_c}{L_c}\Big)^{\frac{y_c^{mf}(t)}{C_f}}
    \ln{\Big(\frac{2\Psi U_c}{L_c}\Big)}\text{d}y_{c}^{mf}(t).
\end{align}
As such, after inserting $f'(y_c^{mf}(t))$, $ K_c^{mf*'}(p_c^{mf}(t))$, and $\text{d}p_c^{mf}(t)$ in \eqref{eqn:diffalloc}, we can show that the Differential Allocation-Payment Relationship holds when choosing $\alpha_1 = \alpha_c^{mf}(t)= \ln{\Big(\frac{2\Psi U_c}{L_c}\Big)}$. Because \eqref{eqn:diffalloc} holds for the update function, cost function, and Fenchel conjugate, the remainder of the proof follows from Lemmas \ref{Background 1}, \ref{Background 2}, and \ref{Background 3}.\hfill$\square$
\vspace{3ex}

\begin{lemma}
\label{lemma energy}
The online dual variable update heuristic \eqref{eqn:zero inf price energy} is $\alpha_2$-competitive in welfare generated from allocating the limited energy generation at each SOMC EVSE where:
\begin{align}
\nonumber
    \alpha_2= \ln{\Big(\frac{2\Psi U_e}{L_e}\Big)}.
\end{align}
\end{lemma}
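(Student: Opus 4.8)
The plan is to mirror the argument for Lemma \ref{lemma cables} exactly, since the EVSE energy constraint \eqref{eqn:offline energy lim} is structurally identical to the cable constraint \eqref{eqn:offline cable lim}: both are hard capacity bounds with no associated monetary cost. First I would observe that the implicit cost function $f(y_e^{mf}(t))$ for the per-EVSE energy resource is again a zero--infinite step function, with the jump occurring immediately past the capacity $E_f$. Consequently $f'(y_e^{mf}(t)) = 0$ on the entire feasible range, and the update rule \eqref{eqn:zero inf price energy} never permits $y_e^{mf}(t)$ to exceed $E_f$, so $y_e^{mf}(t)\le E_f$ for all $m,f,t$; in particular $\big(p_e^{mf}(t) - f'(y_e^{mf}(t))\big)\,\text{d}y_e^{mf}(t) = p_e^{mf}(t)\,\text{d}y_e^{mf}(t)$.

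Next I would compute the two derivatives entering the Differential Allocation--Payment Relationship \eqref{eqn:diffalloc}. From the Fenchel conjugate \eqref{eq:fenchel energy cost} we get $K_e^{mf*'}(p_e^{mf}(t)) = E_f$, a constant. Differentiating the update function \eqref{eqn:zero inf price energy} gives
\begin{align*}
    \text{d}p_e^{mf}(t) = \Big(\frac{L_e}{2\Psi E_f}\Big)\Big(\frac{2\Psi U_e}{L_e}\Big)^{\frac{y_e^{mf}(t)}{E_f}}\ln\!\Big(\frac{2\Psi U_e}{L_e}\Big)\,\text{d}y_e^{mf}(t).
\end{align*}
Substituting these into \eqref{eqn:diffalloc}, the factor $\big(\tfrac{L_e}{2\Psi}\big)\big(\tfrac{2\Psi U_e}{L_e}\big)^{y_e^{mf}(t)/E_f}$ on the left-hand side is precisely $p_e^{mf}(t)$, so after cancellation the inequality reduces to $1 \ge \tfrac{1}{\alpha}\ln\!\big(\tfrac{2\Psi U_e}{L_e}\big)$, which holds with equality at $\alpha_2 = \alpha_e^{mf}(t) = \ln\!\big(\tfrac{2\Psi U_e}{L_e}\big)$. (I would also note in passing that $L_e \le U_e$ and $2\Psi \ge 1$ guarantee $\alpha_2 \ge 1$, as required for a competitive ratio.) Having verified \eqref{eqn:diffalloc} for the update function, its implicit cost function, and the corresponding Fenchel conjugate, the remainder of the lemma — translating the pointwise differential inequality into the global $\alpha_2$-competitiveness bound on welfare from the energy resource — follows verbatim from the background Lemmas \ref{Background 1}, \ref{Background 2}, and \ref{Background 3}.

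There is essentially no obstacle here: the only thing to be careful about is the algebraic bookkeeping in the cancellation step, making sure the $E_f$ in the denominator of $\text{d}p_e^{mf}(t)$ cancels against the $E_f$ from $K_e^{mf*'}$ and that the exponential prefactors match up so that exactly $\ln\!\big(\tfrac{2\Psi U_e}{L_e}\big)$ remains. Since this is the same computation as in Lemma \ref{lemma cables} with $(L_c, U_c, C_f, y_c^{mf})$ replaced by $(L_e, U_e, E_f, y_e^{mf})$, the proof can be stated concisely by appealing to that parallel.
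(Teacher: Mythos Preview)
Your proposal is correct and takes exactly the approach the paper intends: the paper's own proof of Lemma~\ref{lemma energy} simply states that it is analogous to Lemma~\ref{lemma cables} and omits the details. You have written out that analogy in full, with the substitutions $(L_c,U_c,C_f,y_c^{mf})\to(L_e,U_e,E_f,y_e^{mf})$ carried through correctly in the derivative of the update function, the Fenchel conjugate derivative $K_e^{mf*'}=E_f$, and the cancellation yielding $\alpha_2=\ln(2\Psi U_e/L_e)$.
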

\vspace{1ex}
\noindent \textit{Proof of Lemma \ref{lemma energy}}:
The proof is omitted, it is analogous to the proof of Lemma \ref{lemma cables}.

\vspace{1ex}
\begin{lemma}
\label{lemma generation}
The online dual variable update heuristic \eqref{eqn:newpricing} is $\alpha_3$-competitive in welfare when allocating and procuring electricity at each charging facility with procurement costs as in \eqref{eq:generation cost} where:
\begin{align}
\nonumber
    \alpha_3= \max_{\mathcal{F},\mathcal{T}}\Big\{\ln{\Big(\frac{2\Psi(U_g-\pi_f(t))}{L_g-\pi_f(t)}\Big)}\Big\}.
\end{align}
\end{lemma}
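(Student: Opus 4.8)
The plan is to follow the template of the proof of Lemma~\ref{lemma cables}: show that the two-branch exponential update rule \eqref{eqn:newpricing} satisfies the Differential Allocation-Payment Relationship \eqref{eqn:diffalloc} for the generation cost $G_f(\cdot)$ with parameter $\alpha_3$, after which the competitive guarantee follows verbatim from Lemmas~\ref{Background 1}, \ref{Background 2}, and \ref{Background 3}. The genuinely new feature relative to the cable case is that $G_f(\cdot)$ in \eqref{eq:generation cost} is piecewise linear rather than a zero-infinite step: its slope is $0$ on the solar band $y_g^f(t)\in[0,\delta_f(t))$, equals $\pi_f(t)$ on the grid band $y_g^f(t)\in(\delta_f(t),\delta_f(t)+\mu_f(t)]$, and is $+\infty$ beyond $\delta_f(t)+\mu_f(t)$. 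Correspondingly, the conjugate derivative read from \eqref{eq:fenchel gen cost} is $G_f^{*'}(p_g^f(t))=\delta_f(t)$ when $p_g^f(t)<\pi_f(t)$ and $\delta_f(t)+\mu_f(t)$ when $p_g^f(t)\ge\pi_f(t)$, and the two branches of \eqref{eqn:newpricing} are split exactly at the breakpoint $y_g^f(t)=\delta_f(t)$, the first branch producing prices below $\pi_f(t)$ (starting from $L_g/(2\Psi)$ and reaching $\pi_f(t)$ at the breakpoint) and the second producing prices at least $\pi_f(t)$ (reaching $U_g$ when $y_g^f(t)=\delta_f(t)+\mu_f(t)$), so that $y_g^f(t)$ is never pushed past $\delta_f(t)+\mu_f(t)$ and $G_f'$ stays finite.

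First I would differentiate the two branches of \eqref{eqn:newpricing}. On the solar band this gives $\text{d}p_g^f(t)=\tfrac{1}{\delta_f(t)}\ln\!\big(\tfrac{2\Psi\pi_f(t)}{L_g}\big)\,p_g^f(t)\,\text{d}y_g^f(t)$, and on the grid band $\text{d}p_g^f(t)=\tfrac{1}{\delta_f(t)+\mu_f(t)}\ln\!\big(\tfrac{2\Psi(U_g-\pi_f(t))}{L_g-\pi_f(t)}\big)\,\big(p_g^f(t)-\pi_f(t)\big)\,\text{d}y_g^f(t)$. Inserting $G_f'$, $G_f^{*'}$, and $\text{d}p_g^f(t)$ into \eqref{eqn:diffalloc} on the solar band, where $G_f'=0$ and $G_f^{*'}=\delta_f(t)$, a common factor of $p_g^f(t)\,\text{d}y_g^f(t)$ cancels and the inequality reduces to $\alpha\ge\ln\!\big(\tfrac{2\Psi\pi_f(t)}{L_g}\big)$; on the grid band, where $G_f'=\pi_f(t)$ and $G_f^{*'}=\delta_f(t)+\mu_f(t)$, the difference $p_g^f(t)-G_f'=p_g^f(t)-\pi_f(t)$ equals the exponential part of the second branch, a common factor of $\big(p_g^f(t)-\pi_f(t)\big)\,\text{d}y_g^f(t)$ cancels, and the inequality reduces to $\alpha\ge\ln\!\big(\tfrac{2\Psi(U_g-\pi_f(t))}{L_g-\pi_f(t)}\big)$. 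Taking the larger of these two candidate values for each $f$ and $t$, and invoking the ordering of $L_g$, $\pi_f(t)$, and $U_g$ implied by their definitions (constructed as in \eqref{eqn:Lc}--\eqref{eqn:Uc}) to conclude that the grid-band value dominates, gives $\alpha_g^f(t)=\ln\!\big(\tfrac{2\Psi(U_g-\pi_f(t))}{L_g-\pi_f(t)}\big)$; maximizing over $\mathcal{F}$ and $\mathcal{T}$ yields the stated $\alpha_3$. Since \eqref{eqn:diffalloc} then holds for \eqref{eqn:newpricing}, $G_f$, and $G_f^*$, the remainder of the proof is identical to that of Lemma~\ref{lemma cables} and follows from Lemmas~\ref{Background 1}, \ref{Background 2}, and \ref{Background 3}.

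The step I expect to be the main obstacle is the breakpoint $y_g^f(t)=\delta_f(t)$, where the cost slope jumps from $0$ to $\pi_f(t)$, the conjugate slope jumps from $\delta_f(t)$ to $\delta_f(t)+\mu_f(t)$, and the price hands off from the first to the second branch of \eqref{eqn:newpricing} (and, in general, these two branches need not meet continuously). One must argue that \eqref{eqn:diffalloc} can still be applied branch by branch without a correction term; concretely, that the price is nondecreasing through the breakpoint, and that the continuity of the conjugate $G_f^*(\cdot)$ at $p=\pi_f(t)$ (both pieces of \eqref{eq:fenchel gen cost} equal $\delta_f(t)\pi_f(t)$ there), together with the small-allocation assumption, ensures that no net payment is lost in the hand-off, so that the accumulated dual objective stays dominated by $\alpha_3$ times the accumulated online welfare. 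A smaller accompanying check is that the definitions of $L_g$ and $U_g$ place $\pi_f(t)$ so that the bases $\tfrac{2\Psi\pi_f(t)}{L_g}$ and $\tfrac{2\Psi(U_g-\pi_f(t))}{L_g-\pi_f(t)}$ both exceed one, making \eqref{eqn:newpricing} increasing, and so that the grid-band logarithm is the larger of the two candidate competitive factors.
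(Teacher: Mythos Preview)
Your proposal is correct and follows essentially the same route as the paper: compute $G_f'$, $G_f^{*'}$, and $\text{d}p_g^f$ on each of the two branches of \eqref{eqn:newpricing}, plug them into the Differential Allocation-Payment Relationship \eqref{eqn:diffalloc}, obtain the two candidate constants $\ln\!\big(\tfrac{2\Psi\pi_f(t)}{L_g}\big)$ and $\ln\!\big(\tfrac{2\Psi(U_g-\pi_f(t))}{L_g-\pi_f(t)}\big)$, take their maximum, and then invoke Lemmas~\ref{Background 1}--\ref{Background 3}. The paper simply sets $\alpha_3=\max\{\hat\alpha_g^f(t),\hat{\hat\alpha}_g^f(t)\}$ without your additional breakpoint and dominance discussion, so your argument is in fact slightly more careful than the original.
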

\vspace{3ex}
\noindent \textit{Proof of Lemma \ref{lemma generation}}:
We will show that the dual variable update heuristic in \eqref{eq:generation cost} satisfies the \textit{Differential Payment-Allocation Relationship} in equation \eqref{eqn:diffalloc} with parameter $\alpha_3$. Then the rest of the Lemma follows from Lemmas \ref{Background 1}, \ref{Background 2}, and \ref{Background 3}.

For the energy-procurement operational cost in \eqref{eq:generation cost} and its Fenchel conjugate \eqref{eq:fenchel gen cost} respectively, the following derivatives are:
\begin{align*}
    &G_f'(y_g^f(t)) = 
    \begin{cases}
        0, & y_g^f(t)\in [0, \delta_f(t)]\\
        \pi_f(t), & y_g^f(t)\in (\delta_f(t), \delta_f(t)+\mu_f(t)]
    \end{cases}\\
    &\nonumber\textrm{ and }\\
    &G_f^{*'}(p_g^f(t)) = 
    \begin{cases}
        \delta_f(t), & p_g^f(t)\in[0,\pi_f(t))\\
        \delta_f(t)+\mu_f(t), & p_g^f(t) \geq \pi_f(t).
    \end{cases}
\end{align*}
The derivative of the proposed pricing function \eqref{eqn:newpricing} is:
\begin{align*}
    &\text{d}p_g^f(y_g^f(t)) = \\*
    &\nonumber\begin{cases}
        \Big( \frac{L_g}{2\Psi\delta_f(t)} \Big) \Big( \frac{2\Psi\pi_f(t)}{L_g} \Big)^{ \frac{y_g^f(t)}{\delta_f(t)}}\\
        \nonumber\times\ln{\Big(\frac{2\Psi\pi_f(t)}{L_g}\Big)}\text{d}y_{g}^{f}(t), \hspace{69pt} y_g^f(t) < \delta_f(t), \\
        \Big( \frac{L_g-\pi_f(t)}{2\Psi(\delta_f(t)+\mu_f(t))} \Big) \Big( \frac{2\Psi(U_g-\pi_f(t))}{L_g-\pi_f(t)} \Big)^{ \frac{y_g^f(t)}{\delta_f(t)+\mu_f(t)}}\\
        \nonumber\times\ln{\Big(\frac{2\Psi(U_g-\pi_f(t))}{L_g-\pi_f(t)}\Big)}\text{d}y_{g}^{f}(t), \hspace{48pt}y_g^f(t) \geq \delta_f(t). \\
        \end{cases}
s\end{align*}

When $y_g^f(t) < \delta_f(t)$, $G_f'(y_g^l(t)) = 0$ and $G_f^{*'}(p_g^f(t)) = \delta_f(t)$. As such, after inserting the derivative $\text{d}p_g^f(y_g^f(t))$ in \eqref{eqn:diffalloc}, we can show that the Differential Allocation-Payment Relationship holds when $\hat{\alpha}_g^{f}(t)\geq \ln{\Big(\frac{2\Psi\pi_f(t)}{L_g}\Big)}$. 

Similarly, when $y_g^f(t) \geq \delta_f(t)$, $G_f'(y_g^f(t)) = \pi_f(t)$ and $G_f^{*'}(p_g^f(t)) = \delta_f(t)+\mu_f(t)$. As such, after inserting the derivative $\text{d}p_g^f(y_g^f(t))$ in \eqref{eqn:diffalloc}, we can show that the Differential Allocation-Payment Relationship holds when $\hat{\hat{\alpha}}_g^{f}(t)\geq \ln{\Big(\frac{2\Psi(U_g-\pi_f(t))}{L_g-\pi_f(t)}\Big)}$. Now, let $\alpha_3 = \alpha_g^{f}(t)=\max\{\hat{\alpha}_g^{f}(t), \hat{\hat{\alpha}}_g^{f}(t)\}$ and because \eqref{eqn:diffalloc} holds for the proposed pricing function, operational cost function, and Fenchel conjugate, the remainder of the proof follows from Lemmas \ref{Background 1}, \ref{Background 2}, and \ref{Background 3}.\hfill$\square$
\vspace{3ex}

\begin{lemma}
\label{lemma destination}
The online dual variable update heuristic \eqref{eqn:newpricing_reroute} is $\alpha_4$-competitive in welfare when routing EVs to each pick-up destination $d$ with AMoD limit $\Omega_d(t)$ where:
\begin{align}
\nonumber
    \alpha_4= \ln{\Big(\frac{2\Psi U_d}{L_d}\Big)}.
\end{align}
\end{lemma}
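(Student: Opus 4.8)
The plan is to follow the identical template used in the proofs of Lemmas \ref{lemma cables} and \ref{lemma energy}: show that the heuristic dual update function \eqref{eqn:newpricing_reroute} satisfies the Differential Allocation-Payment Relationship \eqref{eqn:diffalloc} with the claimed parameter $\alpha_4 = \ln\big(\tfrac{2\Psi U_d}{L_d}\big)$, and then invoke Lemmas \ref{Background 1}, \ref{Background 2}, and \ref{Background 3} to conclude $\alpha_4$-competitiveness in welfare. The regional vehicle limit is a hard capacity constraint with no genuine operational cost, so its cost function $f(y_d(t))$ is the zero-infinite step function with the step at $\Omega_d(t)$; since the update function \eqref{eqn:newpricing_reroute} is strictly increasing in $y_d(t)$ and forces $y_d(t) \le \Omega_d(t)$ for all $d,t$, we have $f'(y_d(t)) = 0$ throughout the feasible region, exactly as in the cable case.

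First I would record the derivative of the Fenchel conjugate \eqref{eq:fenchel reroute cost}, which is simply $R_d^{*\prime}(p_d(t)) = \Omega_d(t)$. Next I would differentiate the proposed pricing function \eqref{eqn:newpricing_reroute} to obtain
\begin{align*}
    \text{d}p_d(t) = \Big(\frac{L_d}{2\Psi\Omega_d(t)}\Big)\Big(\frac{2\Psi U_d}{L_d}\Big)^{\frac{y_d(t)}{\Omega_d(t)}}\ln\Big(\frac{2\Psi U_d}{L_d}\Big)\text{d}y_d(t).
\end{align*}
Substituting $f'(y_d(t)) = 0$, $R_d^{*\prime}(p_d(t)) = \Omega_d(t)$, and $\text{d}p_d(t)$ into \eqref{eqn:diffalloc}, the inequality $p_d(t)\,\text{d}y_d(t) \ge \tfrac{1}{\alpha_d(t)}\Omega_d(t)\,\text{d}p_d(t)$ reduces — after cancelling the common factor $\big(\tfrac{L_d}{2\Psi\Omega_d(t)}\big)\big(\tfrac{2\Psi U_d}{L_d}\big)^{y_d(t)/\Omega_d(t)}\text{d}y_d(t)$ and using $p_d(t) = \big(\tfrac{L_d}{2\Psi}\big)\big(\tfrac{2\Psi U_d}{L_d}\big)^{y_d(t)/\Omega_d(t)}$ — to the condition $\alpha_d(t) \ge \ln\big(\tfrac{2\Psi U_d}{L_d}\big)$. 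Setting $\alpha_4 = \alpha_d(t) = \ln\big(\tfrac{2\Psi U_d}{L_d}\big)$ therefore makes \eqref{eqn:diffalloc} hold, and the remainder of the proof follows from Lemmas \ref{Background 1}, \ref{Background 2}, and \ref{Background 3}.\hfill$\square$

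There is essentially no obstacle here — the result is a direct structural copy of Lemma \ref{lemma cables} with $c_{js}^{mf}(t)\mapsto d_{js}^+(t)$, $C_f\mapsto\Omega_d(t)$, and $(L_c,U_c)\mapsto(L_d,U_d)$, since the regional limit and the EVSE cable limit are both pure zero-infinite capacity constraints with linear Fenchel conjugates. The only point requiring a word of care is confirming that $f'(y_d(t)) = 0$ on the whole feasible range, which holds precisely because the exponential update function blows up fast enough to enforce $y_d(t) \le \Omega_d(t)$ before the step is reached; once that is noted, the algebraic verification of \eqref{eqn:diffalloc} is routine and the competitive-ratio conclusion is immediate from the cited background lemmas.
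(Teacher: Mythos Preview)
Your proposal is correct and follows precisely the approach the paper intends: the paper itself omits this proof, stating only that it is analogous to the proof of Lemma~\ref{lemma cables}, and you have carried out exactly that analogy with the substitutions $C_f\mapsto\Omega_d(t)$ and $(L_c,U_c)\mapsto(L_d,U_d)$. The verification of \eqref{eqn:diffalloc} and the appeal to Lemmas~\ref{Background 1}--\ref{Background 3} are all in order.
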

\vspace{3ex}
\noindent \textit{Proof of Lemma \ref{lemma destination}}: The proof is omitted, it is analogous to the proof of Lemma \ref{lemma cables}.
\vspace{1ex}

\begin{lemma}
\label{lemma outofservice}
The online dual variable update heuristic \eqref{eqn:newpricing_outofservice} is $\alpha_5$-competitive in welfare when assigning between-ride schedules to AMoD EVs and incurring out-of-service penalties as in \eqref{eq:outofservice cost} where:
\begin{align}
\nonumber
    \alpha_5= \max_{\mathcal{T}}\Big\{\ln{\Big(\frac{2\Psi(U_o-\phi(t))}{L_o-\phi(t)}\Big)}\Big\}.
\end{align}
\end{lemma}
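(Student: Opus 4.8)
The plan is to mirror the template used in the proofs of Lemmas \ref{lemma cables} and \ref{lemma generation}: I would show that the out-of-service update function \eqref{eqn:newpricing_outofservice} satisfies the Differential Allocation-Payment Relationship \eqref{eqn:diffalloc} with parameter $\alpha_5$, after which the competitive-ratio claim follows verbatim from Lemmas \ref{Background 1}, \ref{Background 2}, and \ref{Background 3}. Structurally this lemma sits between the simple ``zero--infinite'' case of Lemma \ref{lemma cables} and the generation case of Lemma \ref{lemma generation}, since \eqref{eq:outofservice cost} is linear-with-slope-$\phi(t)$ up to the cap $I(t)$ and infinite beyond it, i.e., it behaves exactly like the second (grid-purchase) branch of the generation cost, with no free segment.

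First I would record the three derivatives entering \eqref{eqn:diffalloc}. From \eqref{eq:outofservice cost}, $O'(y_o(t)) = \phi(t)$ for every $y_o(t)\in[0,I(t)]$; from the conjugate \eqref{eq:fenchel outofservice}, $O^{*'}(p_o(t)) = I(t)$ whenever $p_o(t)\ge\phi(t)$ (and $0$ otherwise); and differentiating \eqref{eqn:newpricing_outofservice} gives
\[
\text{d}p_o(y_o(t)) = \Big(\frac{L_o-\phi(t)}{2\Psi I(t)}\Big)\Big(\frac{2\Psi(U_o-\phi(t))}{L_o-\phi(t)}\Big)^{\frac{y_o(t)}{I(t)}}\ln\Big(\frac{2\Psi(U_o-\phi(t))}{L_o-\phi(t)}\Big)\,\text{d}y_o(t).
\]

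Before substituting I would dispatch two bookkeeping points. The exponential factor in \eqref{eqn:newpricing_outofservice} is strictly positive (assuming, as in Lemma \ref{lemma generation}, that the input bounds satisfy $L_o>\phi(t)$), so $p_o(y_o(t))\ge\phi(t)$ always; hence the relevant branch of the conjugate derivative is $O^{*'}(p_o(t))=I(t)$ and $p_o(t)-O'(y_o(t)) = p_o(t)-\phi(t) = \big(\tfrac{L_o-\phi(t)}{2\Psi}\big)\big(\tfrac{2\Psi(U_o-\phi(t))}{L_o-\phi(t)}\big)^{y_o(t)/I(t)}>0$. Moreover, evaluating \eqref{eqn:newpricing_outofservice} at $y_o(t)=I(t)$ returns $p_o=U_o$, which by the definition of $U_o$ (as in \eqref{eqn:Uc}) is at least $v_{js}/o_{js}(t)$ for every schedule active at that slot, so \eqref{eq: u_n} never selects a schedule that would push $y_o(t)$ past $I(t)$ --- exactly the argument used in Lemma \ref{lemma cables} --- and therefore $O'(y_o(t))=\phi(t)$ stays valid over the entire reachable range.

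Substituting the three quantities into \eqref{eqn:diffalloc}, the common factor $\big(\tfrac{L_o-\phi(t)}{2\Psi}\big)\big(\tfrac{2\Psi(U_o-\phi(t))}{L_o-\phi(t)}\big)^{y_o(t)/I(t)}\,\text{d}y_o(t)$ cancels from both sides and the relation collapses to $1 \ge \tfrac{1}{\alpha_o(t)}\ln\big(\tfrac{2\Psi(U_o-\phi(t))}{L_o-\phi(t)}\big)$, so it holds with $\alpha_o(t)=\ln\big(\tfrac{2\Psi(U_o-\phi(t))}{L_o-\phi(t)}\big)$; setting $\alpha_5=\max_{\mathcal{T}}\alpha_o(t)$ makes \eqref{eqn:diffalloc} hold at every $t$, and invoking Lemmas \ref{Background 1}, \ref{Background 2}, and \ref{Background 3} finishes the proof. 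I expect the only real obstacle to be the affine-offset bookkeeping --- confirming $p_o(t)\ge\phi(t)$ so the nonzero branch of $O^{*'}$ is the one in play, and that $L_o,U_o>\phi(t)$ so that $\alpha_5$ is well-defined and positive --- which is precisely the subtlety already resolved in the second case of the proof of Lemma \ref{lemma generation}.
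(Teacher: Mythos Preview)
Your proposal is correct and takes essentially the same approach as the paper: the paper omits the proof of Lemma~\ref{lemma outofservice}, stating only that it is similar to the proof of Lemma~\ref{lemma generation} without the piecewise operational cost, and that is precisely the argument you have carried out. Your observation that \eqref{eq:outofservice cost} behaves exactly like the second (grid-purchase) branch of the generation cost, together with the resulting single-case verification of \eqref{eqn:diffalloc}, matches the paper's intent.
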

\vspace{1ex}
\noindent \textit{Proof of Lemma \ref{lemma outofservice}}: The proof is omitted, it is similar to the proof of Lemma \ref{lemma generation} without needing to account for the piecewise operational cost.

\begin{lemma}
\label{Background 1}
(From \cite{IaaS}) If there is a constant $\alpha \geq 1$ such that the incremental increase of the primal and dual objective values differ by at most an $\alpha$ factor, i.e., $P^j - P^{j-1} \geq \frac{1}{\alpha}(D^j - D^{j-1})$, for every between-ride session $j$, then the heuristic is $\alpha$-competitive.
\end{lemma}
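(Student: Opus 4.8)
The plan is to prove the claim by a telescoping argument combined with weak duality. Write $P^j$ and $D^j$ for the values of the offline primal objective \eqref{eqn:offline obj} and of the Fenchel dual objective \eqref{eqn:dual obj}, respectively, evaluated at the primal iterate $x$ and dual iterate $(u,p)$ maintained by \textsc{onlineAMoDscheduling} immediately after between-ride session $j$ has been processed, with $P^0,D^0$ their values at initialization (line 3 of the Algorithm). The hypothesis supplies, for every $j\in\{1,\dots,J\}$, the per-session inequality $P^j-P^{j-1}\ge\tfrac{1}{\alpha}\big(D^j-D^{j-1}\big)$, so the proof is really a matter of accumulating this over all sessions and then relating the two accumulated objectives.

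First I would sum the per-session inequality over $j=1,\dots,J$; both sides telescope, giving
\begin{align*}
P^J-P^0 \;\ge\; \tfrac{1}{\alpha}\big(D^J-D^0\big).
\end{align*}
At initialization no schedule is selected, so $P^0=0$; and the starting prices carry the $\tfrac{1}{2\Psi}$ scaling in \eqref{eqn:zero inf price}--\eqref{eqn:newpricing_outofservice} precisely so that the initial dual objective $D^0$ — a sum of the Fenchel conjugates \eqref{eq:fenchel cable cost}--\eqref{eq:fenchel outofservice} evaluated at those small prices — is dominated, and the displayed chain then yields $P^J\ge\tfrac{1}{\alpha}D^J$ up to this negligible initialization term. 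Next I would invoke weak duality: the terminal iterate $(u,p)$ is a feasible point of the dual \eqref{eqn:dual obj}--\eqref{eqn:dual pj}, so $D^J$ upper-bounds the optimum of the LP relaxation of \eqref{eqn:offline obj}--\eqref{eqn:offline dest cap}, which in turn upper-bounds the optimum of the integral offline problem, i.e. the clairvoyant offline welfare $\mathrm{OPT}$. Combining, $\mathrm{OPT}\le D^J\le\alpha\,P^J$, and since $P^J$ is the welfare realized by the online heuristic, this is exactly $\alpha$-competitiveness.

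Two feasibility facts underpin the argument and would need to be checked explicitly. On the primal side, $x$ is integral by construction (each session commits to a single schedule $s^{\star}$ or to the depot), and each exponential update function reaches a value at least the corresponding maximal per-unit valuation $U_\bullet$ once an allocation count hits its capacity, so in \eqref{eq: u_n} any schedule that would violate \eqref{eqn:offline cable lim}, \eqref{eqn:offline energy lim}, \eqref{eqn:offline dest cap}, or the finite branches of $G_f$ and $O$ has non-positive value and is never the argmax; hence $P^J$ is a finite, feasible primal value equal to the achieved welfare. On the dual side, all prices are positive, and since every update only increases an allocation count the prices are non-decreasing across sessions; therefore the $u_j$ computed from the current prices in \eqref{eq: u_n} still satisfies \eqref{eqn:dual user utility} with the (larger) terminal prices, so $(u,p)$ is dual-feasible at termination. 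The main obstacle is the bookkeeping around $P^0$ and $D^0$: one must argue that the $\tfrac{1}{2\Psi}$-scaled initial dual objective is small enough relative to a lower bound on $\mathrm{OPT}$ (using the small-footprint assumption on individual sessions) that the telescoped inequality lands at the stated ratio $\alpha$ rather than a degraded constant; the telescoping and the weak-duality step themselves are routine.
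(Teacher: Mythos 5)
Your proof follows essentially the same route as the paper's: telescope the per-session inequality to get $P^J - P^0 \geq \frac{1}{\alpha}(D^J - D^0)$, use $P^0 = 0$, and invoke weak duality $D^J \geq \mathrm{OPT}$. The only difference is that the paper simply asserts $D^0 = 0$, whereas you (more carefully) note that the initial prices are positive so $D^0$ is only \emph{negligible} due to the $\frac{1}{2\Psi}$ scaling and must be bounded against $\mathrm{OPT}$ --- a fair point, but it does not change the argument's structure.
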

\noindent\textit{Proof of Lemma \ref{Background 1}}: Summing up the inequality at each step $j$, we have
\begin{align*}
    P^J &= \sum_j(P^j - P^{j-1})\geq \frac{1}{\alpha} \sum_j (D^j - D^{j-1})\\
    &=\frac{1}{\alpha}(D^J-D^0).
\end{align*}
Now, the initial primal value $P^0=0$ and by weak duality, $D^J \geq OPT$. Next, since $D^0=0$, we have that $P^J \geq \frac{1}{\alpha}OPT$. Thus, the online heuristic is $\alpha$-competitive. $\hfill\square$ 

\vspace{2ex}
\begin{lemma}
\label{Background 2}
(From \cite{IaaS}) If the \textit{Differential Allocation-Payment Relationship} holds for $\alpha\geq1$, then for each between-ride session $j$, the chosen schedule $s_j^{\star}$ guarantees that the increase in the fleet dispatcher's utility outweighs the operational cost by satisfying the following:
\end{lemma}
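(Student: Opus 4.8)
\textbf{Proof proposal for Lemma \ref{Background 2}.}

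The plan is to show that whenever the online heuristic processes between-ride session $j$ and selects the utility-maximizing schedule $s_j^\star$ via \eqref{eq: u_n}, the resulting increase in the primal objective (the new value $v_{js^\star}$ minus the incremental operational costs incurred on all the affected resources) is nonnegative and, more precisely, dominates a scaled version of the increase in the dual objective. First I would write the primal increment $P^j - P^{j-1}$ explicitly: it equals $v_{js^\star} x_{js^\star}$ minus the sum over all affected time slots and resources of the marginal operational cost terms $G_f(y_g^f(t)) - G_f(y_g^{f,\text{old}}(t))$, $O(y_o(t)) - O(y_o^{\text{old}}(t))$, etc., which (since each session uses only a small increment of each resource, by the small-bid assumption stated before the theorem) can be approximated by the differential form $f'(y(t))\,\text{d}y(t)$ for each resource's cost function $f$.

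Next I would write the dual increment $D^j - D^{j-1}$. The dual objective \eqref{eqn:dual obj} changes in two ways when session $j$ arrives and the heuristic updates the dual variables: first, the new term $u_j$ is added, where $u_j$ is exactly the optimum in \eqref{eq: u_n} evaluated with the current (pre-update) dual variables; second, each Fenchel-conjugate term $R_d^*, G_f^*, O^*, K_c^{mf*}, K_e^{mf*}$ changes because the corresponding dual variable $p(t)$ is bumped up by $\text{d}p(t)$ according to the update functions \eqref{eqn:zero inf price}--\eqref{eqn:newpricing_outofservice}. The change in a conjugate term is, to first order, $f^{*\prime}(p(t))\,\text{d}p(t)$. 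So $D^j - D^{j-1} = u_j + \sum (\text{resources, slots}) f^{*\prime}(p(t))\,\text{d}p(t)$.

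The key algebraic step is then to combine these: I would argue that since $s_j^\star$ maximizes \eqref{eq: u_n}, the primal gain $v_{js^\star}$ minus the \emph{payments} $\sum p(t)\,\text{d}y(t)$ at the current prices equals $u_j$, so that $P^j - P^{j-1} - u_j = \sum_{\text{resources}} \big(p(t) - f'(y(t))\big)\text{d}y(t)$ (the payment net of the true marginal cost, i.e., the dispatcher's margin). Invoking Definition 1 (the Differential Allocation--Payment Relationship \eqref{eqn:diffalloc}), each such term is at least $\tfrac{1}{\alpha}f^{*\prime}(p(t))\,\text{d}p(t)$. Summing over all resources and time slots gives $P^j - P^{j-1} - u_j \geq \tfrac{1}{\alpha}\sum f^{*\prime}(p(t))\,\text{d}p(t) = \tfrac{1}{\alpha}(D^j - D^{j-1} - u_j)$, and since $u_j \geq 0$ (constraint \eqref{eqn:dual ui}) and $\alpha \geq 1$ we can fold $u_j$ back in to conclude $P^j - P^{j-1} \geq \tfrac{1}{\alpha}(D^j - D^{j-1})$, which is the hypothesis of Lemma \ref{Background 1} — this is presumably the precise inequality the Lemma statement is about to assert.

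The main obstacle I anticipate is handling the discreteness carefully: the allocations $x_{js}$ are integral and each session adds a finite (not infinitesimal) chunk to each $y(t)$, so the passage from the true finite differences of $f$ and $f^*$ to the differential quantities $f'(y)\,\text{d}y$ and $f^{*\prime}(p)\,\text{d}p$ needs the small-bid hypothesis and a monotonicity/convexity argument (the cost functions $G_f, O$ are convex and the update functions are monotone increasing, so the finite increment is bounded by the derivative at one endpoint). A secondary subtlety is the piecewise structure of $G_f$ and $O$ (the solar-then-grid kink, and the hard cap at $I(t)$): one must check that a single session's increment does not straddle a breakpoint in a way that breaks the bound, or otherwise argue that the worst case occurs within a single linear piece, exactly as handled piecewise in the proof of Lemma \ref{lemma generation}. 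The rest is bookkeeping across the five resource types, each already shown to satisfy \eqref{eqn:diffalloc} in Lemmas \ref{lemma cables}--\ref{lemma outofservice}.
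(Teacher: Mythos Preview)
Your approach is essentially the paper's own argument, and the key step --- expanding $D^j-D^{j-1}$ as $u_j$ plus the sum of conjugate increments $f^{*\prime}(p)\,\text{d}p$, then summing the Differential Allocation--Payment Relationship \eqref{eqn:diffalloc} over all resources and time slots --- is exactly what the paper does (in fact its proof is two sentences: write out $D^j-D^{j-1}$, then ``sum \eqref{eqn:diffalloc} over all shared resources and over the entire time period''). One clarification: the inequality the lemma statement is ``about to assert'' is not $P^j-P^{j-1}\geq\tfrac{1}{\alpha}(D^j-D^{j-1})$ but rather
\[
\hat{p}_{js^\star}-\sum_{t}\Big(G_f^{(j)}-G_f^{(j-1)}+O^{(j)}-O^{(j-1)}\Big)\;\geq\;\tfrac{1}{\alpha}\big(D^j-D^{j-1}-u_j\big),
\]
where $\hat{p}_{js^\star}$ is the total payment at the current prices. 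This is precisely your step~5 (since the left side equals $P^j-P^{j-1}-u_j$ by $v_{js^\star}=u_j+\hat{p}_{js^\star}$). Your step~6, folding $u_j$ back in, is the paper's separate Lemma~\ref{Background 3}; so you have in fact merged the two lemmas into a single pass, which is fine.
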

\begin{align*}
    \hat{p}_{js^{\star}} - \sum_{{t\in[t_j^-,t_{js^{\star}}^+]}} &\bigg( G_f(y_g^f(t))^{(j)} - G_f(y_g^f(t))^{(j-1)}\\
    &+O(y_o(t))^{(j)} - O(y_o(t))^{(j-1)}\bigg) \\
    &\geq \frac{1}{\alpha}(D^j - D^{j-1} - u_j)
\end{align*}
\noindent where 
\begin{align*}
    \hat{p}_{js^{\star}} = \;& p_d(t_{js^{\star}}^{+})d_{js^{\star}}^{+}(t_{js^{\star}}^{+}) \\
    &+ \sum_{{t\in[t_j^-,t_{js^{\star}}^+]}} \Big(o_{js^{\star}}(t)p_{o}(t) +  c_{js^{\star}}^{mf}(t)p_{c}^{mf}(t)\\
    &+e_{js^{\star}}^{mf}(t)(p_{e}^{mf}(t)+p_{g}^{f}(t)) \Big).
\end{align*}
\vspace{1ex}

\noindent\textit{Proof of Lemma \ref{Background 2}}: We expand out $D^j-D^{j-1} =$
\begin{align*}
    u_j + \sum_{t\in[t_j^-,t_{js^{\star}}^+]}  &\bigg( R_d^*(p_d(t))^{(j)} - R_d^*(p_d(t))^{(j-1)} \\
    &+ G_f^*(p_g^f(t))^{(j)} - G_f^*(p_g^f(t))^{(j-1)}\\
    &+ O^*(p_o(t))^{(j)} - O^*(p_o(t))^{(j-1)}\\
    &+ K_c^{mf*}(p_c^{mf}(t))^{(j)} - K_c^{mf*}(p_c^{mf}(t))^{(j-1)}\\
    &+ K_e^{mf*}(p_e^{mf}(t))^{(j)} - K_e^{mf*}(p_e^{mf}(t))^{(j-1)}.
\end{align*}
The lemma follows by summing the \textit{Differential Payment-Allocation Relationship} over all shared resources and over the entire time period.\hfill $\square$
\vspace{2ex}

\begin{lemma}
\label{Background 3}
(From \cite{IaaS}) If the Differential Allocation-Payment Relationship holds for $\alpha\geq1$ then $P^j-P^{j-1} \geq \frac{1}{\alpha} (D^j - D^{j-1}) $ for all $j$.
\end{lemma}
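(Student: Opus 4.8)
The plan is to read off the per-session estimate $P^j - P^{j-1} \ge \frac{1}{\alpha}(D^j - D^{j-1})$ from Lemma~\ref{Background 2}, the defining optimality of the utility $u_j$ in \eqref{eq: u_n}, and an elementary identity for the primal objective increment. First I would pin down that increment. Of the five shared resource classes, only the energy-procurement cost $G_f$ and the out-of-service penalty $O$ enter the primal objective \eqref{eqn:offline obj}; the cable, EVSE-energy and destination limits are hard constraints that do not appear in the objective at all, and the update functions are built so that at capacity the associated price is large enough to make \eqref{eq: u_n} nonpositive, hence no schedule that would violate a hard cap is ever committed. Consequently the online primal stays feasible, $G_f$ and $O$ remain finite, those three resources contribute nothing to the objective value, and when session $j$ is revealed and Algorithm~\ref{algorithm} commits schedule $s_j^{\star}$ with $x_{js^{\star}}=1$ we get
\begin{equation*}
P^j - P^{j-1} = v_{js^{\star}} - \!\!\!\sum_{t\in[t_j^-,\,t_{js^{\star}}^+]}\!\!\! \Big( G_f(y_g^f(t))^{(j)} - G_f(y_g^f(t))^{(j-1)} + O(y_o(t))^{(j)} - O(y_o(t))^{(j-1)} \Big).
\end{equation*}

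Next I would split on the sign of $u_j$. If $u_j \le 0$, Algorithm~\ref{algorithm} routes the vehicle to the central depot and allocates nothing, so every $y(t)$ and every price is unchanged, $P^j = P^{j-1}$ and $D^j - D^{j-1} = 0$, and the inequality holds trivially. If $u_j > 0$, the maximization in \eqref{eq: u_n} attained at $s_j^{\star}$ says exactly $v_{js^{\star}} = u_j + \hat{p}_{js^{\star}}$, where $\hat{p}_{js^{\star}}$ is the price-weighted allocation of Lemma~\ref{Background 2}. Substituting this into the displayed identity, the right-hand side becomes $u_j$ plus precisely the left-hand side of the inequality in Lemma~\ref{Background 2}; since the Differential Allocation-Payment Relationship is assumed to hold for the given $\alpha \ge 1$, that lemma applies and yields
\begin{equation*}
P^j - P^{j-1} \;\ge\; u_j + \tfrac{1}{\alpha}\big(D^j - D^{j-1} - u_j\big) \;=\; \big(1 - \tfrac{1}{\alpha}\big)u_j + \tfrac{1}{\alpha}\big(D^j - D^{j-1}\big).
\end{equation*}
Because $\alpha \ge 1$ makes $1 - 1/\alpha \ge 0$ and $u_j > 0$, the first term is nonnegative, so $P^j - P^{j-1} \ge \frac{1}{\alpha}(D^j - D^{j-1})$ for every $j$, as claimed.

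There is no deep obstacle here --- the substantive work has already been absorbed into Lemma~\ref{Background 2} --- but the one point to get exactly right is the bookkeeping in the first step: the primal increment carries only the $G_f$ and $O$ cost increments, whereas the dual increment $D^j - D^{j-1}$, when expanded as in the proof of Lemma~\ref{Background 2}, carries Fenchel-conjugate increments for \emph{all five} resource classes, including the three that are pure constraints. The reconciliation is exactly what Lemma~\ref{Background 2} is structured to provide: its left-hand side keeps the full price-weighted allocation $\hat{p}_{js^{\star}}$ over every resource while subtracting only the $G_f$ and $O$ increments, so the conjugate increments of the constraint resources sit on the dual side and are controlled there, term by term, through the Differential Allocation-Payment Relationship. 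Accordingly I would devote most of the write-up to spelling out that accounting and to verifying that the dual update functions keep every hard cap satisfied, so that $P^j$ stays finite and the identity above is valid throughout.
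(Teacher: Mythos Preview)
Your proposal is correct and follows essentially the same route as the paper's own proof: the identical case split on whether the vehicle is sent to the depot, the same primal-increment identity involving only the $G_f$ and $O$ cost differences, the substitution $v_{js^{\star}} = u_j + \hat{p}_{js^{\star}}$, and then Lemma~\ref{Background 2} together with $u_j \ge 0$ and $\alpha \ge 1$ to finish. The additional commentary you include on why the hard-cap resources never enter the primal objective and on feasibility of the online primal is a welcome elaboration, but it does not change the underlying argument.
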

\vspace{2ex}

\noindent \textit{Proof of Lemma \ref{Background 3}}: If between-ride session $j$ results in the EV being sent back to the depot, then $P^j-P^{j-1} = D^j - D^{j-1}=0$. Otherwise, the change of the primal objective is $P^j - P^{j-1} = $
\begin{align*}
    v_{js^{\star}} - \sum_{t\in[t_j^-,t_{js^{\star}}^+]} &\Big( G_f(y_g^f(t))^{(j)} - G_f(y_g^f(t))^{(j-1)}\\
    &+O(y_o(t))^{(j)} - O(y_o(t))^{(j-1)}\Big)
\end{align*}
where $v_{js^{\star}} = u_j + \hat{p}_{js^{\star}}$. By Lemma \ref{Background 2}, we get that
\begin{align*}
    P^j - P^{j-1} \geq u_j + \frac{1}{\alpha}(D^j - D^{j-1} -u_j).
\end{align*}
With $u_j\geq 0$ and $\alpha\geq1$, then $P^j-P^{j} \geq \frac{1}{\alpha} (D^j - D^{j-1}) $.\hfill$\square$

\end{document}